\newtheorem{lemma}{Lemma}
\newtheorem{theorem}{Theorem}
\newtheorem{corollary}{Corollary}
\def\snr {\mbox{\scriptsize\sf SIR}}
\def\snr {\mbox{\scriptsize\sf SNR}}
\def\sinr {\mbox{\scriptsize\sf SINR}}
\newcommand{\PP}{\mathbb{P}}
\newcommand{\E}{\mathbb{E}}
\newcommand{\dd}{{\rm d}}
\def\chr#1{#1}
\begin{document}

\title{A Tractable Approach to Coverage and Rate in Cellular Networks}
\author{Jeffrey G. Andrews, Fran\c{c}ois Baccelli, and Radha Krishna Ganti
\thanks{F. Baccelli is with Ecole Normale Superieure (ENS) and INRIA in Paris, France, J. G. Andrews and R. K. Ganti are with the Dept. of ECE, at the University of Texas at Austin.  The contact email is jandrews@ece.utexas.edu.
Date revised: \today}}

\maketitle

\begin{abstract}
Cellular networks are usually modeled by placing the base stations on a grid, with mobile users either randomly scattered or placed deterministically.  These models have been used extensively but suffer from being both highly idealized and not very tractable, so complex system-level simulations are used to evaluate coverage/outage probability and rate. More tractable models have long been desirable. We develop new general models for the multi-cell signal-to-interference-plus-noise ratio (SINR) using stochastic geometry.  Under very general assumptions, the resulting expressions for the downlink SINR CCDF (equivalent to the coverage probability) involve quickly computable integrals, and in some practical special cases can be simplified to common integrals (e.g., the Q-function) or even to simple closed-form expressions. We also derive the mean rate, and then the coverage gain (and mean rate loss) from static frequency reuse.  We compare our coverage predictions to the grid model and an actual base station deployment, and observe that the proposed model is pessimistic (a lower bound on coverage) whereas the grid model is optimistic, and that both are about equally accurate. In addition to being more tractable, the proposed model may better capture the increasingly opportunistic and dense placement of base stations in future networks.  \end{abstract}

\section{Introduction}

Cellular systems are now nearly universally deployed and are under ever-increasing pressure to increase the volume of data they can deliver to consumers.  Despite decades of research, tractable models that accurately model other-cell interference (OCI) are still unavailable, which is fairly remarkable given the size of the industry. This deficiency has impeded the development of techniques to combat OCI, which is the most important obstacle to higher spectral efficiency in today's cellular networks, particularly the dense ones in urban areas that are under the most strain.  In this paper we develop accurate and tractable models for downlink capacity and coverage, considering full network interference.

\subsection{Common Approaches and Their Limitations}

A tractable but overly simple downlink model commonly used by information theorists is the Wyner model \cite{Wyn94,ShaWyn97,SomSha00}, which is typically one-dimensional and presumes a unit gain from each base station to the tagged user and an equal gain that is less than one to the two users in the two neighboring cells.  This is a highly inaccurate model unless there is a very large amount of interference averaging over space, such as in the uplink of heavily-loaded CDMA systems \cite{XuZhaAnd10}.  This philosophical approach of distilling other-cell interference to a fixed value has also been advocated for CDMA in \cite{VitVit94} and used in the landmark paper \cite{GilJac91}, where other-cell interference was modeled as a constant factor of the total interference.  For cellular systems using orthogonal multiple access techniques such as in LTE and WiMAX, the Wyner model and related mean-value approaches are particularly inaccurate, since the SINR values over a cell vary dramatically.  Nevertheless, it has been commonly used even up to the present to evaluate the ``capacity'' of multicell systems under various types of multicell cooperation \cite{SomZai07,JinTse08EURASIP,SimSom09}. Another common analysis approach is to consider a single interfering cell \cite{Cha11}.  In the two cell case, at least the SINR varies depending on the user position and possibly fading, but naturally such an approach still neglects most sources of interference in the network and is highly idealized. A recent discussion of such models for the purposes of base station cooperation to reduce interference is given in \cite{GesHan10}. That such simplified approaches to other-cell interference modeling are still considered state-of-the-art for analysis speaks to the difficulty in finding more realistic tractable approaches.

On the other hand, practicing systems engineers and researchers in need of more realistic models typically model a 2-D network of base stations on a regular hexagonal lattice, or slightly more simply, a square lattice, see Fig. \ref{fig:reg}.  Tractable analysis can sometimes be achieved for a \emph{fixed} user with a small number of interfering base stations, for example by considering the ``worst-case'' user location -- the cell corner -- and finding the signal-to-interference-plus-noise ratio (SINR) \cite{RapBook,GolBook}.  The resulting SINR is still a random variable in the case of shadowing and/or fading from which performance metrics like (worst-case) average rate and (worst-case) outage probability relative to some target rate can be determined.  Naturally, such an approach gives very pessimistic results that do not provide much guidance to the performance of most users in the system.  More commonly, Monte Carlo integrations are done by computer, e.g. in the landmark capacity paper \cite{GilJac91}.  Tractable expressions for the SINR are unavailable in general for a random user location in the cell and so more general results that provide guidance into typical SINR or the probability of outage/coverage over the entire cell must be arrived at by complex time-consuming simulations.  In addition to being onerous to construct and run, such private simulations additionally suffer from issues regarding repeatability and transparency, and they seldom inspire ``optimal'' or creative new algorithms or designs.  It is also important to realize that although widely accepted, grid-based models are themselves highly idealized and may be increasingly inaccurate for the heterogeneous and ad hoc deployments common in urban and suburban areas, where cell radii vary considerably due to differences in transmission power, tower height, and user density.  For example, picocells are often inserted into an existing cellular network in the vicinity of high-traffic areas, and short-range femtocells may be scattered in a haphazard manner throughout a centrally planned cellular network.

\subsection{Our Approach and Contributions}

Perhaps counter-intuitively, this paper addresses these long-standing problems by introducing an additional source of randomness: the positions of the base stations.  Instead of assuming they are placed deterministically on a regular grid, we model their location as a homogeneous Poisson point process of density $\lambda$.  Such an approach for BS modelling has been considered as early as 1997 \cite{BacKle97,BacZuy97,Bro00} but the key metrics of coverage (SINR distribution) and rate have not been determined\footnote{The paper \cite{DecMar10} was made public after submission of this paper and contains some similar aspects to the approach in this paper.}. The main advantage of this approach is that the base station positions are all independent which allows substantial tools to be brought to bear from stochastic geometry; see \cite{HaeAnd09} for a recent survey that discusses additional related work, in particular \cite{ChaHan01,ChaAnd09a,YanPet03}.  Although BS's are not independently placed in practice, the results given here can in principle be generalized to point processes that model repulsion or minimum distance, such as determinantal and Matern processes \cite{BacNOW,StoKen96}.  The mobile users are scattered about the plane according to some independent homogeneous point process with a different density, and they communicate with the nearest base station while all other base stations act as interferers, as shown in Fig. \ref{fig:VoronoiCells}.

From such a model, we achieve the following theoretical contributions.  First, we are able to derive a general expression for the probability of coverage in a cellular network where the interference fading/shadowing follows an arbitrary distribution. The coverage probability is the probability that a typical mobile user is able to achieve some threshold SINR, i.e. it is the complementary cumulative distribution function (CCDF).  This expression is not closed-form but also does not require Monte Carlo methods. The coverage is then derived for a number of special cases, namely combinations of (i) exponentially distributed interference power, i.e. Rayleigh fading, (ii) path loss exponent of 4, and (iii) interference-limited networks, i.e. thermal noise is ignored.  These special cases have increasing tractability and in the case that all three simplifications are taken, we derive a remarkably simple formula for coverage probability that depends only on the threshold SINR.  We compare these novel theoretical results with both traditional (and computationally intensive) grid-based simulations and with actual base station locations from a current cellular deployment in a major urban area.  We see that over a broad range of parameter and modeling choices our results provide a reliable lower bound to reality whereas the grid model provides an upper bound that is about equally loose.  In other words, our approach, even in the case of simplifying assumptions (i)-(iii), appears to not only provide simple and tractable predictions but also accurate ones.

Next, we derive the mean achievable rate in our proposed cellular model under similar levels of generality and tractability.  The two competing objectives of coverage and rate are then explored analytically through the consideration of frequency reuse, which is used in some form in nearly all cellular systems\footnote{Even cellular systems such as modern GSM and CDMA networks that claim to deploy universal frequency reuse still thin the interference in time or by using additional frequency bands -- which is mathematically equivalent to thinning in frequency.} to increase the coverage or equivalently the cell edge rates.  Our expressions for coverage and rate are easily modified to include frequency reuse and we find the amount of frequency reuse required to reach a specified coverage probability, as well as seeing how frequency reuse degrades mean rate by using the total bandwidth less efficiently.

\section{Downlink System Model}
\label{sec:model}

The cellular network model consists of base stations (BSs) arranged according to some homogeneous Poisson point process (PPP) $\Phi$ of intensity $\lambda$ in the Euclidean plane. Consider an independent collection of mobile users, located according to some independent stationary point process. We assume each mobile user is associated with the closest base station; namely the users in the Voronoi cell of a BS are associated with it, resulting in coverage areas that comprise a Voronoi tessellation on the plane, as shown in Fig. \ref{fig:VoronoiCells}.  A traditional grid model is shown in Fig. \ref{fig:reg} and an actual base station deployment in Fig. \ref{fig:RealBS}.  The main weakness of the Poisson model is that because of the independence of the PPP, BSs will in some cases be located very close together but with a significant coverage area.  This weakness is balanced by two strengths: the natural inclusion of different cell sizes and shapes and the lack of edge effects, i.e. the network extends indefinitely in all directions.  The models are quantitatively compared in Section \ref{sec:num}.

The standard power loss propagation model is used with path loss exponent $\alpha > 2$.  As far as random channel effects such as fading and shadowing, we assume that the tagged base station and tagged user experience only Rayleigh fading with mean 1, and constant transmit power of $1/\mu$. Then the received power at a typical node a distance $r$ from its base station is $h r^{-\alpha}$ where the random variable $h$ follows an exponential distribution with mean $1/\mu$,  which we denote as $h \sim \exp(\mu)$. Note that other distributions for $h$ can be considered using Prop. 2.2 of \cite{BacBla09} but with some loss of tractability. The interference power follows a general statistical distribution $g$ that could include fading, shadowing, and any other desired random effects.  Simpler expressions result when $g$ is also exponential and these are given as special cases.  Lognormal interference is considered numerically\footnote{ Shadowing is neglected between the tagged BS and user since it can fairly easily be overcome with even slow power control.  In this case the transmit power would be simply $1/g\mu$ and treated as a constant over the shadowing time-scale.}, and we see although it degrades coverage it does not significantly affect the accuracy of our analysis.  Because of these random channel effects, in our model not all users will be connected to the base station capable of providing the highest SINR. All results are for a single transmit and single receive antenna, although future extensions to multiple such antennas are clearly desirable.

The interference power at the typical receiver $I_r$ is the sum of the received powers from all other base stations other than the home base station and is treated as noise in the present work.  There is no same-cell interference, for example due to orthogonal multiple access within a cell.  The noise power is assumed to be additive and constant with value $\sigma^2$ but no specific distribution is assumed in general. The $\snr = \frac{1}{\mu \sigma^2}$ is defined to be the received $\snr$ at a distance of $r = 1$. All analysis is for a typical mobile node which is permissible in a homogeneous PPP by Slivnyak's theorem \cite{StoKen96}.

\section{Coverage}
\label{sec:coverage}

This is the main technical section of the paper, in which we derive the probability of coverage in a downlink cellular network at decreasing levels of generality. The coverage probability is defined as
\begin{equation}
p_c(T,\lambda,\alpha) \triangleq \PP[\sinr > T],
\label{eq:CovDefn}
\end{equation}
and can be thought of equivalently as (i) the probability that a randomly chosen user can achieve a target SINR $T$, (ii) the average fraction of users who at any time achieve SINR $T$, or (iii) the average fraction of the network area that is in ``coverage'' at any time. The probability of coverage is also exactly the CCDF of SINR over the entire network, since the CDF gives $\PP[\sinr \leq T]$.

Without any loss of generality we assume that the mobile user under consideration is located at the origin. A user is in coverage when its SINR from its nearest BS is larger than some threshold $T$  and it is dropped from the network for SINR below $T$.  The SINR of the mobile user at a random distance $r$ from its associated base station can be expressed as
\begin{equation}
\sinr = \frac{h r^{-\alpha}}{\sigma^2 + I_r},
\end{equation}
where
\begin{equation}
I_r = \sum_{i\in\Phi / b_o} g_i R_i^{-\alpha}
\end{equation}
is the cumulative interference from all the other base stations (except the tagged base station for the mobile user at $o$ denoted by $b_o$) which are a distance $R_i$ from the typical user and have fading value $g_i$.

 \subsection{Distance to Nearest Base Station}
\label{subsec:NN}
An important quantity is the distance $r$ separating a typical user from its tagged base station.  Since each user communicates with the closest base station, no other base station can be closer than $r$.  In other words, all interfering base stations must be farther than $r$.  The probability density function (pdf) of $r$ can be derived using the simple fact that the null probability of a 2-D Poisson process in an area $A$ is $\exp(-\lambda A)$.
\begin{eqnarray}
\PP[r > R] & = & \PP[{\rm No ~ BS ~ closer ~ than ~ R}]\\
        &=& e^{-\lambda \pi R^2}.
\end{eqnarray}
Therefore, the cdf is $\PP[r \leq R]  =  F_r(R)  = 1 - e^{-\lambda \pi R^2}$ and the pdf can be found as
\begin{eqnarray}
f_r(r) &=& \frac{\dd F_r(r)}{\dd r} \\
        &=& e^{-\lambda \pi r^2} 2\pi \lambda r.
\end{eqnarray}

Meanwhile, the interference is a standard $M/M$ shot noise \cite{LowTei90,BacNOW,HaenggiNOW} created by a Poisson point process of intensity $\lambda$ outside a disc at center $o$ and of radius $r$, for which some useful results are known and applied in the sequel.

\subsection{General Case and Main Result}

We now state our main and most general result for coverage probability from which all other results in this section follow.

\begin{theorem}
\label{thm:main1}
The probability of coverage of a typical randomly located mobile user in the general cellular network model of Section \ref{sec:model} is
\begin{equation}
p_c(T,\lambda,\alpha) = \pi \lambda \int_0^\infty  e^{-\pi \lambda v\beta(T,\alpha) -\mu T \sigma^2 v^{\alpha/2}} \dd v, \label{eq:Thm1}
\end{equation}
where
\begin{equation}
\beta(T,\alpha)= \frac{2 (\mu T)^{\frac{2}{\alpha}}}{\alpha} \mathbb{E} \left[ g^{\frac{2}{\alpha}} \left(\Gamma(-2/\alpha,\mu T g) -\Gamma(-2/\alpha) \right)\right],
\end{equation}
and the expectation is with respect to the interferer's channel distribution $g$. Also, $\Gamma(a,x) = \int_x^{\infty} t^{a-1}e^{-t}\dd t$  denotes the incomplete gamma function, and $\Gamma(x) = \int_0^{\infty} t^{x-1}e^{-t}\dd t$ the standard gamma function.
\end{theorem}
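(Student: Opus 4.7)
The plan is to condition on the distance $r$ from the typical user to its serving base station, whose pdf $f_r(r)=2\pi\lambda r\,e^{-\pi\lambda r^2}$ was derived in Section~\ref{subsec:NN}, and then exploit the exponential distribution of the desired-link fading $h$ to convert the coverage event into a Laplace transform of the interference. Since $h\sim\exp(\mu)$ is independent of the interferers and their marks, conditioning on $(r,I_r)$ gives $\mathbb{P}[h>T r^\alpha(\sigma^2+I_r)\mid r,I_r]=e^{-\mu T r^\alpha\sigma^2}e^{-\mu T r^\alpha I_r}$, so that
\begin{equation*}
\mathbb{P}[\sinr>T\mid r]=e^{-\mu T r^\alpha\sigma^2}\,\mathcal{L}_{I_r}(\mu T r^\alpha).
\end{equation*}
This is the principal reason Rayleigh fading is assumed on the signal link: a messy tail inequality involving $I_r$ is replaced by a clean Laplace-transform computation.

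Next I would evaluate $\mathcal{L}_{I_r}(s)$. Given that the serving base station is at distance $r$, the remaining interferers form a PPP of intensity $\lambda$ on $\R^2\setminus B(o,r)$, and the Laplace functional of the PPP with i.i.d.\ fading marks yields
\begin{equation*}
\mathcal{L}_{I_r}(s)=\exp\!\left(-2\pi\lambda\!\int_r^\infty\!\bigl(1-\mathbb{E}_g[e^{-sgt^{-\alpha}}]\bigr)t\,\dd t\right).
\end{equation*}
Plugging in $s=\mu T r^\alpha$ and substituting $y=t/r$ strips out the $r$-dependence of the inner integrand, giving $\mathcal{L}_{I_r}(\mu T r^\alpha)=\exp(-2\pi\lambda r^2 J(T,\alpha))$ with $J(T,\alpha):=\int_1^\infty(1-\mathbb{E}_g[e^{-\mu T g y^{-\alpha}}])y\,\dd y$. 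Multiplying by $f_r(r)$, changing variables to $v=r^2$, and merging the two $v$-linear exponents produces
\begin{equation*}
p_c=\pi\lambda\!\int_0^\infty\!\exp\!\bigl(-\pi\lambda v(1+2J(T,\alpha))-\mu T\sigma^2 v^{\alpha/2}\bigr)\,\dd v,
\end{equation*}
so proving the theorem reduces to showing $1+2J(T,\alpha)=\beta(T,\alpha)$.

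This last identification is the main technical obstacle, in part because $\Gamma(-2/\alpha)$ appearing in $\beta$ must be read through analytic continuation (its integral definition diverges at $0$). For fixed $g$, I would substitute $w=\mu T g\,y^{-\alpha}$, which converts the inner integral into $\frac{(\mu T g)^{2/\alpha}}{\alpha}\int_0^{\mu T g}(1-e^{-w})w^{-2/\alpha-1}\,\dd w$. A single integration by parts (with boundary term at $0$ vanishing precisely because $\alpha>2$) reduces this to the lower incomplete gamma $\int_0^{\mu T g}w^{-2/\alpha}e^{-w}\,\dd w$; then the recurrences $\Gamma(s+1,c)=s\Gamma(s,c)+c^s e^{-c}$ and $\Gamma(s)=\Gamma(s+1)/s$ at $s=-2/\alpha$ recast the result as $\frac{(\mu T g)^{2/\alpha}}{\alpha}\bigl(\Gamma(-2/\alpha,\mu T g)-\Gamma(-2/\alpha)\bigr)-\tfrac12$. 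Multiplying by $2$, averaging over $g$ via Fubini, and observing that the $-1$ generated by this computation exactly cancels the leading $+1$ in $1+2J$, the remaining expression is precisely $\beta(T,\alpha)$, completing the proof.
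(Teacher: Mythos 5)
Your proposal is correct and follows essentially the same route as the paper's own proof: condition on the nearest-BS distance $r$, use $h\sim\exp(\mu)$ to turn the coverage probability into $e^{-\mu T r^{\alpha}\sigma^2}\mathcal{L}_{I_r}(\mu T r^{\alpha})$, evaluate the Laplace transform via the PGFL, and reduce the inner integral to incomplete gamma functions before substituting $v=r^2$. Your version merely fills in the paper's terse ``change of variables'' step with an explicit integration by parts and the gamma recurrences (and correctly flags that $\Gamma(-2/\alpha)$ must be read through analytic continuation), which checks out: the $-\tfrac12$ you generate is exactly the paper's $+\lambda\pi r^2$ term inside $\mathcal{L}_{I_r}$, so $1+2J(T,\alpha)=\beta(T,\alpha)$ as required.
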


\begin{proof}Conditioning on the nearest BS being at a distance $r$ from the typical user, the probability of coverage averaged over the plane is
\begin{align*}
p_c(T,\lambda,\alpha) &= \E_r \left[ \PP[\sinr > T\ |\  r] \right] \\
& = \int_{r>0} \PP[\sinr > T\ |\  r ]f_r(r)\dd r \\
& \stackrel{(a)}{=} \int_{r>0} \PP\left[\frac{hr^{-\alpha}}{\sigma^2+I_r} > T\ \Big|\  r\right] e^{-\pi\lambda r^2} 2 \pi \lambda r \dd r \nonumber \\
& =  \int_{r>0} e^{-\pi \lambda r^2} \PP[h>Tr^{\alpha}(\sigma^2+I_r) \ |\  r ] 2 \pi \lambda r \dd r. \nonumber
\end{align*}
The distribution $f_r(r)$ and hence $(a)$ follows from Subsection \ref{subsec:NN}.
Using the fact that $h \sim \exp(\mu)$,   the coverage probability can be expressed as
\begin{eqnarray}
\PP[h>Tr^{\alpha}(\sigma^2+I_r)\ |\  r] &=&  \E_{I_r} \left[\PP[h>Tr^{\alpha}(\sigma^2+I_r) \ |\  r, I_r] \right] \nonumber \\
    &=& \E_{I_r}\left[\exp(-\mu Tr^{\alpha}(\sigma^2+I_r)) \ |\  r \right] \nonumber \\
    &=&   e^{-\mu Tr^{\alpha} \sigma^2}{\cal L}_{I_r}(\mu Tr^{\alpha}),
\end{eqnarray}
where ${\cal L}_{I_r}(s)$ is the Laplace transform of random variable $I_r$ evaluated at $s$ conditioned on the distance  to the closest BS from the origin.  This gives a coverage expression
\begin{equation}
p_c(T,\lambda,\alpha) = \int_{r>0} e^{-\pi \lambda r^2}  e^{-\mu Tr^{\alpha} \sigma^2} {\cal L}_{I_r}(\mu Tr^{\alpha}) 2 \pi \lambda r \dd r. \label{eq:pc1}
\end{equation}
Defining $R_i$ as the distance from the $i$th interfering base station to the tagged receiver and $g_i$ as the interference channel coefficient of arbitrary but identical distribution for all $i$, using the definition of the Laplace transform yields
\begin{align}
{\cal L}_{I_r}(s) &= \E_{I_r}[e^{-sI_r}] = \E_{\Phi, g_i}[\exp(-s\sum_{i\in \Phi\setminus\{ b_o\}} g_i R_i^{-\alpha})] \nonumber\\
&= \E_{\Phi, \{g_i\}}\left[\prod_{i\in \Phi\setminus\{ b_o\}} \exp(-s g_i R_i^{-\alpha})\right] \nonumber\\
&\stackrel{(a)}{=}\E_{\Phi}\left[\prod_{i\in \Phi\setminus \{b_o\}} \E_{g} [\exp(-s g R_i^{-\alpha})] \right] \nonumber\\
&=  \exp\left(-2\pi\lambda \int_r^{\infty} \left(1 - \E_{g}[\exp(-s g v^{-\alpha})]  \right) v  \dd v \right),
\label{eq:laplace}
\end{align}
where $(a)$ follows from the i.i.d. distribution of $g_i$ and its further independence from the point process $\Phi$, and the last step follows from the probability generating functional (PGFL) \cite{StoKen96} of the PPP, which states for some function $f(x)$ that  $\mathbb{E}\left[ \prod_{x\in \Phi} f(x)\right] = \exp\left(-\lambda \int_{\mathbb{R}^2} (1-f(x)) \dd x\right)$.  The integration limits are from $r$ to $\infty$ since the closest interferer is at least at a distance $r$. Let $f(g)$ denote the PDF of $g$. Plugging in $s= \mu Tr^{\alpha}$,  and swapping the integration order gives,
\begin{align*}
{\cal L}_{I_r}(\mu Tr^{\alpha}) &=
\exp\left(-2\pi\lambda\int_0^\infty\left( \int_r^\infty ( 1- e^{-\mu Tr^\alpha v^{-\alpha}g})v \dd v \right)f(g) \dd g  \right).
\end{align*}
 The inside integral can be evaluated by using the change of variables $v^{-\alpha}\rightarrow y$, and the Laplace transform is
\[{\cal L}_{I_r}(\mu Tr^{\alpha})= \exp\left(\lambda\pi r^2- \frac{2\pi\lambda (\mu T)^{\frac{2}{\alpha}}r^2}{\alpha} \int_0^\infty g^{\frac{2}{\alpha}} \left[\Gamma(-2/\alpha,\mu T g) -\Gamma(-2/\alpha)\right] f(g) \dd g  \right).\]
Combining with \eqref{eq:pc1}, and using the substitution $r^2 \rightarrow v$, we obtain the result.
\end{proof}

In short, Theorem 1 gives a general result for the probability of achieving a target SINR $T$ in the network.  It is not closed-form but the integrals are fairly easy to evaluate.  We now turn our attention to a few relevant special cases where significant simplification is possible.

\subsection{Special Cases: Interference Experiences General Fading}

The main simplifications we will now consider in various combinations are (i) allowing the path loss exponent $\alpha = 4$, (ii) an interference-limited network, i.e. $1/\sigma^2 \to \infty$, which we term ``no noise'' and (iii) interference fading power $g \sim \exp(\mu)$ rather than following an arbitrary distribution\footnote{The interference power is also attenuated by the path loss so the \emph{mean} interference power for each base station is less than the \emph{mean} desired power, by definition, even though the fading distributions have the same mean $\mu$, which is a proxy for the transmit power.}.  In this subsection we continue assume the interference power follows a general distribution, so we consider two special cases corresponding to (i) and (ii) above.

\subsubsection{General Fading, Noise, $\alpha = 4$}

First, if $\alpha = 4$, Theorem 1 admits a form that can be evaluated according to
\begin{equation}
\int_0^\infty e^{-a x} e^{-b x^2} \dd x= \sqrt{\frac{\pi}{b}} \exp\left(\frac{a^2}{4b}\right) Q\left(\frac{a}{\sqrt{2 b}}\right),
\label{eq:exp-int}
\end{equation}
where $Q(x) = \frac{1}{\sqrt{2\pi}}\int_x^{\infty} \exp(-y^2/2)\dd y$ is the standard Gaussian tail probability.  Setting $a = \pi \lambda\beta(T,\alpha)$ and $b = \mu T \sigma^2 = T/\snr$ gives
\begin{eqnarray}
p_c(T,\lambda,4) & = & \frac{\pi^{\frac{3}{2}}\lambda}{\sqrt{T/\snr }}  \exp\left(\frac{ (\lambda \pi \beta(T,4))^2 }{4 T/\snr}\right) Q\left(\frac{\lambda \pi\beta(T,4)}{ \sqrt{2 T/\snr }}\right) \label{eq:Special1}.
\end{eqnarray}

Therefore, given the numerical calculation of $\beta(T,4)$ for a chosen interference distribution, the coverage probability can be found in quasi-closed form since $Q(x)$ can be evaluated nearly as easily as a basic trigonometric function by modern calculators and software programs.

\subsubsection{General Fading, No Noise, $\alpha > 2$}
\label{sec:NoNoise}

In most modern cellular networks thermal noise is not an important consideration.  It can be neglected in the cell interior because it is very small compared to the desired signal power (high SNR), and also at the cell edge because the interference power is typically so much larger (high INR).  If $\sigma^2 \to 0$ (or transmit power is increased sufficiently), then using Theorem 1 it is easy to see that
\begin{equation}
p_c(T,\lambda,\alpha) = \frac{1}{\beta(T,\alpha)}.
\label{eq:no-noise}
\end{equation}
In the next subsection, we show that \eqref{eq:Special1} does in fact reduce to \eqref{eq:no-noise} as $\sigma^2 \to 0$, which is not obvious by inspection.

It is interesting to note that in this case the probability of coverage does not depend on the base station density $\lambda$. It follows that both very dense and very sparse networks have a positive probability of coverage when noise is negligible.  Intuitively, this means that increasing the number of base stations does not affect the coverage probability, because the increase in signal power is exactly counter-balanced by the increase in interference power.  This matches empirical observations in interference-limited urban networks as well as predictions of traditional, less-tractable models.  In interference-limited networks, increasing coverage probability typically requires interference management techniques, for example frequency reuse, and not just the deployment of more base stations.  Note that deploying more base stations does allow more users to be simultaneously covered in a given area, both in practice and under our model, because we assume one active user per cell.

\subsubsection{General Fading, Small but Non-zero Noise}
\label{sec:GenNoNoise}
A potentially useful low noise approximation of the success probability can be obtained that is more easily computable than the constant noise power expression and more accurate than the no noise approximation for $\sigma^2 \neq 0$.  Using the expansion $\exp(-x)=1-x+o(x)$, $x\rightarrow 0$ it can be found after an integration by parts of \eqref{eq:Thm1} that
\begin{align}
p_c(T,\lambda,\alpha)&=\frac{1}{\beta(T,\alpha)}- \frac{\mu T \sigma^2(\lambda\pi)^{-\alpha/2}}{\beta(T,\alpha)}\Gamma\left(1+\frac{\alpha}{2}\right)  + o(\sigma^2)
\label{eq:SmallNoise}
\end{align}
For the special case of $\alpha = 4$, it is not immediately obvious that \eqref{eq:Special1} is equivalent to \eqref{eq:no-noise} as $\sigma^2 \to 0$, but indeed it is true.  It is possible to write \eqref{eq:Special1} as
\begin{equation}
p_c(T,\lambda,4) = \frac{\pi^{\frac{3}{2}}\lambda \sqrt{2}}{a} x Q(x)\exp\left(\frac{x^2}{2}\right)
\label{eq:CovSmallNoise}
\end{equation}
where $x = \frac{a}{\sqrt{2b}}$ and $a,b$ as before.  The series expansion of $Q(x)$ for large $x$ is
\begin{equation}
Q(x) = \frac{1}{\sqrt{2\pi}}\exp\left(-\frac{x^2}{2}\right)\left[\frac{1}{x} - \frac{1}{x^2} + o(x^{-4})\right]
\end{equation}
which means that
\begin{equation}
\lim_{x \to \infty} x Q(x)\exp\left(\frac{x^2}{2}\right) = \frac{1}{\sqrt{2 \pi}},
\end{equation}
which allows simplification of \eqref{eq:CovSmallNoise} to \eqref{eq:no-noise} for the case of no noise.

\subsection{Special Cases: Interference is Rayleigh Fading}

Significant simplification is possible when the interference power follows an exponential distribution, i.e. interference experiences Rayleigh fading and shadowing is neglected.  We give the coverage probability for this case as Theorem 2.

\begin{theorem}
\label{thm:main2}
The probability of coverage of a typical randomly located mobile user experiencing exponential interference is
\begin{equation}
p_c(T,\lambda,\alpha) = \pi \lambda \int_0^\infty  e^{-\pi \lambda v (1 + \rho(T,\alpha)) -\mu T \sigma^2 v^{\alpha/2}} \dd v,
\label{eq:expInt}
\end{equation}
where
\begin{equation}
\rho(T,\alpha)= T^{2/\alpha}\int_{T^{-2/\alpha}}^\infty \frac{1}{1+u^{\alpha/2}} \dd u.
\end{equation}
\end{theorem}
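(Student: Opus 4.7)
The approach is to specialize the argument of Theorem~\ref{thm:main1} to the exponential case. One could in principle plug $g\sim\exp(\mu)$ directly into the formula for $\beta(T,\alpha)$ and try to verify $\beta(T,\alpha)=1+\rho(T,\alpha)$, but that would require evaluating an expectation of an incomplete-gamma-function expression against an exponential density, which is opaque. A much cleaner route is to back up to equation~\eqref{eq:laplace}, insert the moment generating function of the exponential before it has been massaged into gamma-function form, and finish the radial integral from there.

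First I would use the fact that for $g\sim\exp(\mu)$ the inner expectation in \eqref{eq:laplace} is rational: $\mathbb{E}_g[\exp(-sgv^{-\alpha})]=\mu/(\mu+sv^{-\alpha})$, so $1-\mathbb{E}_g[\exp(-sgv^{-\alpha})]=1/(1+\mu v^{\alpha}/s)$. Substituting $s=\mu Tr^{\alpha}$ (which is the value at which the Laplace transform is evaluated in \eqref{eq:pc1}) collapses the integrand to $1/(1+(v/r)^{\alpha}/T)$, giving
\[
\mathcal{L}_{I_r}(\mu Tr^{\alpha})=\exp\!\left(-2\pi\lambda\int_r^{\infty}\frac{v\,\dd v}{1+(v/r)^{\alpha}/T}\right).
\]

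Next I would carry out the change of variables $u=(v/r)^{2}T^{-2/\alpha}$, which sends $v=r$ to $u=T^{-2/\alpha}$, turns the denominator into $1+u^{\alpha/2}$, and produces $v\,\dd v=\tfrac{1}{2}r^{2}T^{2/\alpha}\,\dd u$. The radial integral becomes $\tfrac{1}{2}r^{2}T^{2/\alpha}\int_{T^{-2/\alpha}}^{\infty}(1+u^{\alpha/2})^{-1}\dd u$, so the exponent equals $-\pi\lambda r^{2}\rho(T,\alpha)$ and $\mathcal{L}_{I_r}(\mu Tr^{\alpha})=\exp(-\pi\lambda r^{2}\rho(T,\alpha))$. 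This is the main computational step, and essentially the only non-mechanical move in the whole argument; everything after is bookkeeping.

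Finally I would substitute this Laplace transform back into \eqref{eq:pc1}, combine the two $r^{2}$ terms in the exponent to get $-\pi\lambda r^{2}(1+\rho(T,\alpha))$, and apply the substitution $v=r^{2}$ exactly as in Theorem~\ref{thm:main1}. The Jacobian absorbs the factor $2\pi\lambda r$ and yields the prefactor $\pi\lambda$, reproducing \eqref{eq:expInt}. No further simplification is needed, since the $v^{\alpha/2}$ in the noise term arises from $r^{\alpha}=(r^{2})^{\alpha/2}$ as in the general case.
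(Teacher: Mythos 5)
Your proposal is correct and follows essentially the same route as the paper's own proof in Appendix A: both specialize at step $(a)$ of \eqref{eq:laplace} using $\E_g[e^{-sgv^{-\alpha}}]=\mu/(\mu+sv^{-\alpha})$, reduce the integrand to $T/(T+(v/r)^{\alpha})$ after setting $s=\mu Tr^{\alpha}$, and apply the same change of variables $u=(v/(rT^{1/\alpha}))^{2}$ to obtain $\mathcal{L}_{I_r}(\mu Tr^{\alpha})=\exp(-\pi\lambda r^{2}\rho(T,\alpha))$ before substituting into \eqref{eq:pc1} with $v=r^{2}$. No gaps; the computation checks out.
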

\begin{proof}
The proof is a special case of Theorem \ref{thm:main1}, but however lends to much simplification. The proof is provided in Appendix \ref{app:proof1}.
\end{proof}

We now consider the special cases of no noise and $\alpha = 4$.

\subsubsection{Exponential Fading, Noise, $\alpha = 4$}

When $\alpha=4$, using the same approach as in \eqref{eq:exp-int}, we get
\begin{eqnarray}
p_c(T,\lambda,4) & = & \frac{\pi^{\frac{3}{2}}\lambda}{\sqrt{T/\snr }}  \exp\left(\frac{ (\lambda \pi \kappa(T))^2 }{4 T/\snr}\right) Q\left(\frac{\lambda \pi\kappa(T)}{ \sqrt{2 T/\snr }}\right) \label{eq:CovConstNoise},
\end{eqnarray}
where $\kappa(T) = 1+\rho(T,4)= 1 + \sqrt{T}(\pi/2 -\arctan(1/\sqrt{T}))$.

This expression is quite simple and is practically closed-form, requiring only the computation of a simple $Q(x)$ value.
\subsubsection{Exponential Fading, No Noise, $\alpha > 2$}

In the no noise case the result is very similar to general fading in appearance, i.e.
\begin{equation}
p_c(T,\lambda,\alpha) = \frac{1}{1 + \rho(T,\alpha)},
\end{equation}
with $\rho(T,\alpha)$ being faster and easier to compute than the more general expression $\beta(T,\alpha)$.  When the path loss exponent $\alpha=4$, the no noise coverage probability can be further simplified to
\begin{equation}
p_c(T,\lambda,4) =  \frac 1 {1+\sqrt{T} (\pi/2-\arctan(1/\sqrt{T}))} \label{eq:CovNoNoise}.
\end{equation}
This is a remarkably simple expression for coverage probability that depends only on the SIR threshold $T$, and as expected it goes to 1 for $T \to 0$ and to 0 for $T \to \infty$.  For example, if $T = 1$ (0 dB, which would allow a maximum rate of 1 bps/Hz), the probability of coverage in this fully loaded network is $4(4+\pi)^{-1} = 0.56$. This will be compared in more detail to classical models in Section \ref{sec:num}.  A small noise approximation can be performed identically to the procedure of Section \ref{sec:GenNoNoise} with $1+ \rho(T,\alpha)$ replacing $\beta(T,\alpha)$ in \eqref{eq:SmallNoise}.

\section{Average Achievable Rate}
\label{sec:rate}
In this section, we turn our attention to the mean data rate achievable over a cell.  Specifically we compute the mean rate in units of nats/Hz (1 bit $= \ln(2) = 0.693$ nats) for a typical user where adaptive modulation/coding is used so each user can set their rate such that they achieve Shannon bound for their instantaneous SINR, i.e. $\ln(1+\sinr)$.  Interference is treated as noise which means the true channel capacity is not achieved, which would require a multiuser receiver \cite{Cov72,RimUrb96,ElGKimNotes}, but future work could relax this constraint within the random network framework, see e.g. \cite{WebAnd07,BloJin09}. In general, almost any type of modulation, coding, and receiver structure can be easily treated by adding a gap approximation to the rate expression, i.e. $\tau \to \ln(1+\sinr/G)$ where $G \geq 1$ is the gap. The technical tools and organization are similar to Section \ref{sec:coverage} so the discussion will be more concise.  The results are all for exponentially distributed interference power but general distributions could be handled as well following the approach of Theorem \ref{thm:main1} and techniques from \cite{BacBla09}.

\subsection{General Case and Main Result}
We begin by stating the main rate theorem that gives the ergodic capacity of a typical mobile user in the downlink.
\begin{theorem}
\label{thm:rate}
The average ergodic rate of a typical mobile user and its associated base station in the downlink is
\begin{eqnarray}
\tau(\lambda,\alpha) & \triangleq & \E[\ln(1+\sinr)]\\
 & = & \int_{r>0} e^{-\pi \lambda r^2} \int_{t>0} e^{-\sigma^2\mu r^{\alpha}(e^t-1)} {\cal L}_{I_r}(\mu r^{\alpha}(e^t-1)) \dd t  2 \pi \lambda r \dd r,
\end{eqnarray}
where
\begin{equation}
{\cal L}_{I_r}(\mu r^{\alpha}(e^t-1))= \exp\left(-\pi \lambda r^2  (e^{t}-1)^{2/\alpha} \int_{(e^{t}-1)^{-2/\alpha}}^\infty \frac{1}{1+x^{\alpha/2}}  \dd g \right).
\end{equation}
\end{theorem}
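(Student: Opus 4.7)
The plan is to mirror the two-step coverage computation used in Theorems 1 and 2, but now integrate over the rate threshold instead of reporting a single CCDF value. First, I would condition on the distance $r$ to the tagged base station, whose density $f_r(r)=e^{-\pi\lambda r^2}2\pi\lambda r$ is supplied by Subsection \ref{subsec:NN}. This gives
\[
\tau(\lambda,\alpha)=\int_{r>0} \E[\ln(1+\sinr)\mid r]\, e^{-\pi\lambda r^2} 2\pi\lambda r\,\dd r.
\]

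Next, for the inner expectation I would invoke the standard identity $\E[X]=\int_{t>0}\PP[X>t]\dd t$ valid for any nonnegative random variable, applied to $X=\ln(1+\sinr)$. Since $\ln(1+\sinr)>t$ iff $\sinr>e^t-1$, this yields
\[
\E[\ln(1+\sinr)\mid r]=\int_{t>0}\PP[\sinr>e^t-1\mid r]\,\dd t.
\]
The conditional coverage probability inside is exactly what was computed in the first half of the proof of Theorem 1: using $h\sim\exp(\mu)$ on the tagged link,
\[
\PP[\sinr>e^t-1\mid r] = e^{-\mu(e^t-1)r^\alpha\sigma^2}\,{\cal L}_{I_r}\!\big(\mu r^\alpha(e^t-1)\big).
\]
Substituting this back gives the double-integral expression in the statement; the only thing left is to verify the claimed closed form for ${\cal L}_{I_r}$.

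For the Laplace transform, I would plug $s=\mu r^\alpha(e^t-1)$ into the Rayleigh-interference formula used in Theorem 2. With $g\sim\exp(\mu)$ one has $\E_g[e^{-sgv^{-\alpha}}]=1/(1+sv^{-\alpha}/\mu)$, and the PGFL of the PPP on the exterior of the disc of radius $r$ gives
\[
{\cal L}_{I_r}(s)=\exp\!\left(-2\pi\lambda\int_r^\infty\frac{s v^{-\alpha}/\mu}{1+sv^{-\alpha}/\mu}\,v\,\dd v\right).
\]
With $s/\mu=r^\alpha(e^t-1)$, the change of variables $u=(v/r)^2(e^t-1)^{-2/\alpha}$ (equivalently $y=v^2$ followed by a rescaling) normalizes the interferer exclusion radius and the fading-induced distance scaling simultaneously, turning the integrand into $1/(1+u^{\alpha/2})$ and producing the lower limit $(e^t-1)^{-2/\alpha}$, exactly as stated.

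The conceptually nontrivial step is recognizing that the log integral can be written as an integral of the SINR CCDF; after that, every piece is machinery already developed in the proofs of Theorems 1 and 2. The main bookkeeping obstacle is the change of variables in the Laplace transform, which has to be done carefully because both the lower limit of integration (set by $r$) and the scale of the fading variable depend on $r$ and on $t$; the trick is that the particular choice $s=\mu r^\alpha(e^t-1)$ lets a single substitution absorb both dependencies, leaving an $r$- and $t$-free integral times the prefactor $\pi\lambda r^2(e^t-1)^{2/\alpha}$.
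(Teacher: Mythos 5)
Your proposal is correct and follows essentially the same route as the paper's proof in Appendix B: condition on the nearest-BS distance, apply $\E[X]=\int_{t>0}\PP[X>t]\,\dd t$ to $\ln(1+\sinr)$, reduce to the Laplace transform of $I_r$ at $s=\mu r^\alpha(e^t-1)$ via the exponential fading of $h$, and evaluate it with the PGFL under $g\sim\exp(\mu)$. The only cosmetic difference is that you perform the final change of variables in one step where the paper uses two successive substitutions; the result is identical.
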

\begin{proof}
The proof is provided in Appendix \ref{app:thm3}.
\end{proof}
The computation of $\tau$ in general requires three numerical integrations.

\subsection{Special Case: $\alpha=4$}
For $\alpha=4$ the mean rate simplifies to
\begin{align}
\tau(\lambda,4) & =  \int\limits_{t>0} \int\limits_{r>0} e^{-\sigma^2 \mu r^{4}(e^t-1)}e^{-\pi \lambda r^2 \left(1+\sqrt{e^t-1}
\left( \pi/2-\arctan(1/\sqrt{e^{t}-1})\right)\right)} 2 \pi \lambda r \dd r \dd t.\nonumber \\
& =  \int\limits_{t>0} \int\limits_{v>0} e^{-\sigma^2 \mu v^{2}(e^t-1)/(\pi \lambda)^2} e^{-v \left(1+ \sqrt{e^t-1}  \left(\pi/2-\arctan(1/\sqrt{e^{t}-1})\right)\right)}\dd v \dd t\nonumber
\end{align}
Using \eqref{eq:exp-int},
\begin{align}
\tau(\lambda,4) &= \int\limits_{t>0} \sqrt{\frac{\pi}{b(t)}}\exp\left(\frac{a(t)^2}{4 b(t)}\right)Q\left(\frac{a(t)}{\sqrt{2b(t)}}\right)\dd t \label{eq:rateAlpha},
\end{align}
where $a(t)= 1+ \sqrt{e^t-1}  \left(\pi/2-\arctan(1/\sqrt{e^{t}-1})\right)$ and $b(t) =\sigma^2 \mu (e^t-1)/(\pi \lambda)^2$.  The final expression \eqref{eq:rateAlpha} be evaluated numerically with one numerical integration (presuming an available look up table for $Q(x)$).


\subsection{Special Case: No Noise}
When noise is neglected, $\sigma^2 \to 0$, so
\begin{eqnarray}
\tau(\lambda,\alpha)
& =  & \int_{r>0} \int_{t>0} \exp\left(-\pi \lambda r^2  \left(1+(e^t-1)^{2/\alpha}
\int_{(e^t-1)^{-2/\alpha}}^\infty \frac{1}{1+x^{\alpha/2}}  \dd x \right)\right) 2 \pi \lambda r \dd r \dd t \nonumber\\
& = & \int_{t>0} \int_{r>0} \exp\left(-v \left(1+(e^t-1)^{2/\alpha} \int_{(e^t-1){-2/\alpha}}^\infty \frac{1}{1+x^{\alpha/2}} \dd x \right)\right) \dd v \dd t \nonumber\\
& = & \int_{t>0} \frac 1   {1+(e^t-1)^{2/\alpha} \int_{(e^t-1)^{-2/\alpha}}^\infty \frac{1}{1+x^{\alpha/2}} \dd x} \dd t,
\label{eq:rate_nonoise}
\end{eqnarray}
a quantity that again does not depend on $\lambda$. As in the case of coverage, increasing the base station density does not increase the interference-limited ergodic capacity per user in the downlink because the distance from the mobile user to the nearest base station and the average distance to the nearest interferer both scale like $\Theta(\lambda^{-1/2})$, which cancel.  Note, however, that the overall sum throughput and area spectral efficiency of the network \emph{do increase linearly} with the number of base stations since the number of active users per area achieving rate $\tau$ is exactly $\lambda$, assuming that the user density is sufficiently large such that there is at least one mobile user per cell.

In the particular case of $\alpha=4$ in conjunction with no noise,
\begin{eqnarray*}
(e^t-1)^{2/\alpha} \int_{(e^t-1)^{-2/\alpha}}^\infty \frac{1}{1+x^{\alpha/2}}  \dd x = \sqrt{e^t-1} \left(\pi/2-\arctan(1/\sqrt{e^{t}-1})\right),
\end{eqnarray*}
so the mean rate is expressed to a single simple numerical integration that yields a precise scalar
\begin{eqnarray}
\tau(\lambda,4) = \int_{t>0}\frac 1  {1+  \sqrt{e^t-1}  \left(\pi/2-\arctan(1/\sqrt{e^{t}-1})\right)} \dd t\approx 1.49 \rm{nats/sec/Hz}.
\label{eq:rate_alpha4}
\end{eqnarray}
In other words, our model predicts that the no noise limit for mean downlink rate in a cellular system with Rayleigh fading is 2.15 bps/Hz if $\alpha = 4$.

\section{Validation of the Proposed Model}
\label{sec:num}

Now that we have developed expressions for the coverage and mean rate for cellular networks, it is important to see how these analytical results compare with the widely accepted grid model.  Further, we were able to obtain precise coordinates for base stations over a large urban area from a major service provider, and we compare our results to the coverage predicted by those locations as well (which are neither a perfect grid nor Poisson). Intuitively, we would expect the Poisson model to give pessimistic results compared to a planned deployment due to the strong interference generated by nearby base stations. The grid model is clearly an upper bound since a perfectly regular geometry is in fact optimal from a coverage point of view \cite{Bro00}.  An additional source of optimism in the grid model is the customary neglect of background interference from outer tier base stations.  We see in Section \ref{sec:regular} that the latter effect is not very significant, however.

\subsection{The Grid Model and An Actual BS Deployment}

\label{sec:regular}
A periodic grid is typically used in prior work to model the base station locations. We use a square lattice for notational simplicity but a hexagonal one can also be used all results will only differ by a very small constant.  We consider a home base station located at the origin and $N$ interfering base stations located in square tiers around the home base station. Each tier is a distance $2R$ from the previous tier, i.e. each base station coverage area is a $2R \times 2R$ square, and so any user within a distance $R$ of a base station is guaranteed to be covered by it.  The base station density in this case is $1/4R^2$ base stations per unit area.  A two tier example with $N = 24$ is shown in Figure \ref{fig:reg}.  The SINR for a regular base station deployment becomes
\begin{equation}
\sinr = \frac{h r_u^{-\alpha}}{I_u + \sigma^2},
\end{equation}
where $r_u = \sqrt{x_u^2 + y_u^2}$ with $x_u \sim U[-R,R]$ and $y_u \sim U[-R,R]$.  The channel fading power is still $h \sim \exp(\mu)$ as in previous sections. The interference to the tagged user is now
\begin{equation}
I_u = \sum_{i=1}^N g_i r_i^{-\alpha}
\end{equation}
where $r_i = \sqrt{(x_i-x_u)^2 + (y_i - y_u)^2}$ is the distance seen from interfering base station $i$ and $g_i$ its observed fading power.  The probability of coverage is
\begin{equation}
p_c(T,\alpha) = \PP[\sinr>T] = \PP[h > r_u^{\alpha} T(I_u + \sigma^2)],
\end{equation}
which is no different in principle than \eqref{eq:CovDefn}, but due to the structure of $I_u$ it is difficult to proceed analytically, and so numerical integration is used to compare with the results of Sections \ref{sec:coverage} and \ref{sec:rate}.  One important difference between the behavior of grid and random BS models are the allowed extremes on the distances of the tagged and interfering base stations.  In a grid model, there is always a base station within a specified distance $R$ and never an interfering one closer than $R$.  In the proposed model, two base stations can be arbitrarily close together and hence there is no lower bound on $R$, so both the tagged and an interfering base station can be arbitrarily close to the tagged user. The only constraint is that the interfering base station must be farther than the tagged one, or else a handoff would occur.

We have also obtained the coordinates of a current base station deployment by a major service provider in a relatively flat uniform urban area.  This deployment stretches over an approximately $100 \times 100$ km square, and we show a zoom of the middle $40 \times 40$ km in Fig. \ref{fig:RealBS}.  In this figure\chr{,} the cell boundaries correspond to a Voronoi tessellation and hence are only a function of Euclidean distance, whereas in practice other factors might determine the cell boundaries.  Clearly this is only a single deployment and further validation should be done.  However, we strongly suspect that deployments in many cities follow an even less regular topology due to irregular terrain such as large hills and water features and/or high concentrated population centers.  It seems such scenarios might be even better suited to a random spatial model that the example provided here.

\subsection{Coverage Comparison}

In Fig. \ref{fig:cov1}, we compare the traditional square grid model to the random PPP base station model.  The plot gives the probability that a given $\sinr$ target $T$ on the x-axis can be achieved, i.e. it gives the complementary cumulative distribution function (CCDF) of $\sinr$, i.e. $\PP[\sinr > T]$. Both $N=8$ and $N=24$ are used, and it can be seen that the $N=8$ case is only slightly more optimistic as opposed to $N=24$, at least for $\alpha =4$ (the gap increases slightly for smaller $\alpha$).  The curves all exhibit the same basic shape and as one would expect, a regular grid provides a higher coverage area over all possible $\sinr$ targets.  A small ($< 1$ dB) gap is seen between the $\snr = 10$ and $\snr \to \infty$ cases, which confirms that noise is not a very important consideration in dense cellular networks, which are known to be interference-limited. Therefore, we neglect noise in the ensuing plots.

In Fig. \ref{fig:cov2} we now compare the three different base station location models with exponential (Rayleigh fading) interference.  The random BS model is indeed a lower bound and the grid model an upper bound.  The random BS model appears no worse than the grid model in terms of accuracy and may be preferable from the standpoint that it provides conservative predictions, as well as being much more analytically tractable.  The Poisson BS model becomes more accurate at lower path loss exponents.  There are two reasons for this.  First, the PPP models distant interference whereas a 1 or 2 tier grid model does not; and the the interference of far-off base stations is more significant for small $\alpha$. Second, since a weakness of the Poisson model is the artificially high probability of a nearby and dominant interfering base station, at lower path loss exponents, perhaps counter-intuitively, such an effect is less corrupting because a dominant base station contributes a lower fraction of the total interference due to the slower attenuation of non-dominant interferers.

Next we consider the effect on lognormal interference, which is common in cellular networks.  Whereas shadowing to the desired base station can be overcome with power control (or macrodiversity, not considered here) the interference remains lognormal.  We assume the shadowing is given by a value $10^{\frac{X}{10}}$ where $X \sim N(\xi,\kappa^2)$ and $\xi$ and $\kappa$ are now in dB.  We normalize $\xi$ to be the same as for the exponential case and consider various values of $\kappa$ in Figs. \ref{fig:LN1} and \ref{fig:LN2}. Fig. \ref{fig:LN1} shows the extent to which lognormal interference increases the coverage probability in our model, whereas Fig. \ref{fig:LN2} shows that our model still reasonably tracks a real deployment. It may seem counterintuitive that increasing lognormal interference increases the coverage probability, the reason being that cell edge users have poor mean SINR (often below $T$), and so increasing randomness gives them an increasing chance of being in coverage.  It also implies that SINR-aware scheduling, which is not considered here, might be able to significantly increase coverage.


\section{Frequency Reuse: Coverage vs. Rate}

Cellular network operators must provide at least some coverage to their customers with very high probability.  For example,  $\sinr = 1$ might be a minimal level of quality needed to provide a voice call.  In this case, for $\alpha = 4$ we can see from Fig. \ref{fig:cov1} that the grid model gives a success probability of about 0.7 and the PPP model predicts 0.53.  Clearly, neither is sufficient for a commercial network, so cellular designers must find a way to increase the coverage probability.  Assuming the network is indeed interference-limited, a common way to do this is to reduce the number of interfering base stations.  This can be done statically through a planned and fixed frequency reuse pattern and/or cell sectoring, or more adaptively via a reduced duty cycle in time (as in GSM or CDMA voice traffic), fractional frequency reuse, dynamic bandwidth allocation, or other related approaches \cite{LTEBook,Boudreau2009}.  More sophisticated interference cancellation/suppression approaches can also be used, potentially utilizing multiple antennas.  In this paper,  we restrict our attention to  straightforward per-cell frequency reuse.

In frequency reuse, the reuse factor $\delta \geq 1$ determines the number of different frequency bands used by the network, where just one band is used per cell.  For example, if $\delta = 2$ then the square grid of Fig. \ref{fig:reg} can assign the top row base stations frequencies 1, 2, 1, 2, 1, and then the second row 2, 1, 2, 1, 2, and so on.  In this way interfering base stations are now separated by a distance $2\sqrt{2}R $ rather than $2R$. Larger values of $\delta$ monotonically decrease the interference, e.g. $\delta = 4$ allows a base station separation of $4R$ in the square grid model.

The PPP BS model also allows for interference thinning, but instead of a fixed pattern (which is not possible in a random deployment) we assume that each base station picks one of $\delta$ bands at random. A visual example is given in Fig. \ref{fig:freqReuse} for $\delta = 4$. The model suffers from the fact that adjacent base stations may simultaneously use the same frequency even for large $\delta$.  However, it is not clear this is any worse of a model than the idealistic grid model with pre-planned frequency reuse, especially for 4G OFDMA-based networks that will use dynamic frequency allocation with very aggressive overall frequency reuse, so that each subcarrier may appear from a birds-eye view of the network to be allocated almost randomly at any snapshot in time.


\subsection{Increasing Coverage via Frequency Reuse}

First, we consider the effect of random frequency reuse on the coverage probability.
\begin{theorem}
\label{thm:freq_outage}
If  $\delta$ frequency bands are randomly allocated to the cells, then the coverage probability with exponentially distributed interference power is equal to
\begin{equation}
p_c(T,\lambda,\alpha,\delta) = \pi \lambda \int_0^\infty  \exp\left(-\pi \lambda v \left(1 + \frac{1}{\delta}\rho(T,\alpha)\right) -\mu T \sigma^2 v^{\alpha/2}\right) \dd v.
\end{equation}
\end{theorem}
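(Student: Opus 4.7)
The plan is to reduce Theorem \ref{thm:freq_outage} directly to Theorem \ref{thm:main2} by invoking the independent thinning property of the Poisson point process. The key observation is that random frequency allocation does not alter which base station is nearest to the typical user (since the association rule is distance-based, ignoring frequencies), but it randomly removes interferers in an i.i.d.\ manner.

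First, I would note that because the user still associates with the closest BS in $\Phi$, the distance $r$ to the tagged BS retains the pdf $f_r(r) = 2\pi\lambda r\, e^{-\pi\lambda r^2}$ derived in Subsection \ref{subsec:NN}. Next, I would attach i.i.d.\ marks (the frequency band, uniform on $\{1,\dots,\delta\}$) to each point of $\Phi$ and condition on the tagged BS using some band $b$. Since every other BS is independently on band $b$ with probability $1/\delta$, the marking (or independent thinning) theorem \cite{StoKen96} implies that the set of co-channel interferers forms a homogeneous PPP $\Phi_b$ of intensity $\lambda/\delta$. Crucially, because the marks are drawn independently of the point locations, the thinning is independent of which point turned out to be the nearest, so conditioning on $r$ the interferer process is still a PPP of density $\lambda/\delta$ on $\mathbb{R}^2\setminus B(o,r)$.

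Given this, I would simply rerun the Laplace-transform computation from the proof of Theorem \ref{thm:main1} (in its exponential-interference specialization leading to Theorem \ref{thm:main2}), with $\lambda$ replaced by $\lambda/\delta$ in the intensity of the interference process. Every step of that derivation—the PGFL of the PPP, the change of variable $u=v/r$, and the evaluation that produces $\rho(T,\alpha)$—goes through unchanged, yielding
\[
\mathcal{L}_{I_r}(\mu T r^{\alpha}) = \exp\!\left(-\pi\,\frac{\lambda}{\delta}\,r^{2}\,\rho(T,\alpha)\right).
\]
Note that the prefactor $e^{-\pi\lambda r^{2}}$ and the differential $2\pi\lambda r\,\dd r$ coming from $f_r$ still carry the original density $\lambda$, since these describe the serving-BS geometry rather than the interference. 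Substituting back into \eqref{eq:pc1} and changing variables $v=r^{2}$ produces the stated expression.

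The main obstacle is essentially conceptual rather than computational: one must be careful that the thinning is applied to the ``other'' BSes without also thinning the serving BS, and that the distance distribution $f_r$ is determined by the full process $\Phi$ (density $\lambda$), not by the co-channel subprocess $\Phi_b$ (density $\lambda/\delta$). This is why $\lambda$ and $\lambda/\delta$ appear asymmetrically in the two terms inside the exponential. Once that distinction is kept straight, the result is a one-line modification of Theorem \ref{thm:main2}, and no new analytic work is required beyond invoking the independent-marking theorem.
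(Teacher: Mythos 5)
Your proposal is correct and follows essentially the same route as the paper: the authors likewise keep the serving-BS distance distribution governed by the full process of density $\lambda$, observe that the co-channel interferers form an independently thinned PPP of density $\lambda/\delta$, and then rerun the proof of Theorem~\ref{thm:main2} verbatim. Your added care about the marking theorem and the asymmetry between $\lambda$ and $\lambda/\delta$ is a correct elaboration of what the paper states more tersely, not a different argument.
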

\begin{proof}
A typical mobile user at the origin $o$ would be served by its closest BS from the complete point process $\Phi$.  Call this distance $r$, then as in Section \ref{subsec:NN} $r$ is Rayleigh distributed with PDF $f_r(r)=\lambda 2\pi r\exp(-\lambda \pi r^2)$ with the interferers
located outside $r$. The interfering BSs which transmit in the same frequency band are a thinned version of the original PPP and have a density $\lambda/\delta$.  Since a thinned version of a PPP is again a PPP, the rest of the proof exactly follows Theorem \ref{thm:main2}.
\end{proof}

Observe that as the number of frequency bands $\delta \rightarrow \infty$ a coverage limit is reached that depends only on the noise power. In Fig. \ref{fig:coverage_freq}, the coverage probability  is plotted with respect to the $\sinr$ threshold $T$ for $\delta = 2$ and $\delta = 4$ for each of the three BS placement models.  

A cellular operator often wishes to \emph{guarantee} a certain probability of coverage to its customers.  For a given blocking/outage probability $\epsilon$, the following corollary for the no noise case provides the number of frequency bands that are required.
\begin{corollary}
The minimum number of frequency bands needed to ensure an outage probability no greater than $\epsilon$ is
\begin{equation}
\delta = \left\lceil \frac{\rho(T,\alpha)(1-\epsilon)}{\epsilon} \right\rceil.
\end{equation}
\end{corollary}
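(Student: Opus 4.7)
The plan is to specialize Theorem \ref{thm:freq_outage} to the no-noise regime, turn the coverage expression into an explicit closed form in $\delta$, and then algebraically invert the outage condition $1 - p_c \leq \epsilon$ to solve for $\delta$.

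First, I would set $\sigma^2 = 0$ in Theorem \ref{thm:freq_outage}. The second term in the exponent inside the integral vanishes, and the resulting integral is just $\int_0^\infty \exp(-\pi\lambda v(1 + \rho(T,\alpha)/\delta))\,\dd v$, which evaluates in closed form. This yields
\begin{equation*}
p_c(T,\lambda,\alpha,\delta) \;=\; \frac{1}{1 + \rho(T,\alpha)/\delta},
\end{equation*}
mirroring the no-noise reduction done earlier after equation \eqref{eq:no-noise} but now with the thinned interferer density. In particular, $p_c$ does not depend on $\lambda$, which is consistent with the observation in Section \ref{sec:NoNoise}.

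Next I would write the outage probability as
\begin{equation*}
1 - p_c(T,\lambda,\alpha,\delta) \;=\; \frac{\rho(T,\alpha)/\delta}{1 + \rho(T,\alpha)/\delta} \;=\; \frac{\rho(T,\alpha)}{\delta + \rho(T,\alpha)},
\end{equation*}
and impose $1 - p_c \leq \epsilon$. Cross-multiplying (the denominator is positive) and isolating $\delta$ yields $\delta(1 - \epsilon \cdot 0) \geq \rho(T,\alpha)(1-\epsilon)/\epsilon$, i.e.\ $\delta \geq \rho(T,\alpha)(1-\epsilon)/\epsilon$. Since the outage probability is monotonically decreasing in $\delta$ (the right-hand side is strictly decreasing in $\delta$), the smallest $\delta$ satisfying the inequality is obtained by taking the ceiling, giving the claimed formula.

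There is no real obstacle here: the only subtlety is to justify that $\delta$ must be a positive integer (it counts frequency bands) so that the ceiling operation is the correct way to minimize, and that monotonicity in $\delta$ permits replacing ``any $\delta$ that works'' by ``the ceiling of the critical real value.'' Both points are immediate from the closed-form expression above.
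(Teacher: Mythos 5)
Your proposal is correct and follows essentially the same route as the paper: the paper likewise sets $\sigma^2=0$ to obtain $p_c(T,\lambda,\alpha,\delta)=\bigl(1+\rho(T,\alpha)/\delta\bigr)^{-1}$ and then solves $p_c = 1-\epsilon$ for $\delta$, rounding up to an integer. Your version merely spells out the cross-multiplication and the monotonicity-in-$\delta$ justification for taking the ceiling, which the paper leaves implicit.
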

\begin{proof}
For the case of no noise, $\sigma^2=0$ and hence ${\cal L}_W(\mu Tr^{\alpha})=1$ which simplifies the coverage probability to
\begin{equation}
p_c(T,\lambda,\alpha,\delta)=\frac{1}{1+ \frac{\rho(T,\alpha)}{\delta}},
\label{eq:no_noise_freq}
\end{equation}
from which the result follows by setting this quantity to be equal to $1-\epsilon$ and requiring it to be an integer.
\end{proof}

We now compare these analytical results with central frequency planning (optimal) for the grid model and a heuristic approach for the actual base stations.  Frequency planning in actual cellular networks is a complex optimization problem that depends on the specific geography and data loads and is often performed heuristically by the service provider. For nominally reuse 1 systems like LTE, frequency reuse can also be done adaptively on a per subcarrier basis but here we consider only static frequency reuse.  To provide a reasonably fair comparison, we have used a simple centralized greedy frequency allocation algorithm that maximizes the distance between cells sharing the same frequency band in the actual BS network. Although not ``optimal'', it provides a reasonable benchmark to compare against the performance predictions of PPP model which uses a random frequency allocation. Examples of both allocations are shown visually in Fig. \ref{fig:freqReuse}. Under a random frequency allocation, adjacent cells may transmit in the same band with a higher probability as compared to a planned allocation.  This leads to an increasing gap between the coverage predicted by the PPP model and that of the actual BSs as $\delta$  increases, as per Fig. \ref{fig:coverage_freq}. As expected, central frequency planning also outperforms random frequency allocation.  This indicates that the proposed model in its current form is not a faithful predictor of coverage for large frequency reuse factors.  However, a recent work \cite{NovGan11}uses this approach to study related and more general interference management techniques such as fractional frequency reuse.

\subsection{Frequency Reuse's Effect on Rate}

The desirable increase in coverage with increasing $\delta$ has to be balanced against the fact that each cell can only use $1/\delta$th of the available frequencies.  In this section we will show that the optimal $\delta$ from a mean rate point of view is in fact $\delta = 1$, i.e. any increase in coverage from frequency reuse is paid for by decreasing the overall sum rate in the network.  The following general result can be given for average rate with frequency reuse, the key observation being that since the bandwidth per cell was previously normalized to 1 Hz, now it is $\frac{1}{\delta}$ Hz.  We assume the SNR per band is unchanged.
\begin{theorem}
If  $\delta$ frequency bands are randomly allocated to the cells, the average rate of a typical mobile user in a downlink is
\begin{equation}
\tau(\lambda,\alpha,\delta) = \frac{1}{\delta} \int_{r>0} e^{-\pi \lambda r^2} \int_{t>0} e^{-\sigma^2\mu r^{\alpha}(e^t-1)} {\cal L}_{I_r}(\mu r^{\alpha}(e^t-1)) \dd t  2 \pi \lambda r \dd r,
\label{eq:333}
\end{equation}
where
\begin{equation}{\cal L}_{I_r}(\mu r^{\alpha}(e^t-1))= \exp\left(- \frac{\lambda\pi r^2  (e^{t}-1)^{2/\alpha}}{\delta}  \int_{(e^{t}-1)^{-2/\alpha}}^\infty \frac{1}{1+g^{\alpha/2}}  \dd g \right).\end{equation}
\end{theorem}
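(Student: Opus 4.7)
The plan is to combine the mean-rate derivation of Theorem~\ref{thm:rate} with the thinning argument used in Theorem~\ref{thm:freq_outage}. Since each cell is allocated only one of $\delta$ frequency bands, a tagged user operates on a bandwidth of $1/\delta$ Hz relative to the original normalization, while the SNR per band is held fixed. Hence the achievable ergodic rate becomes $\tau(\lambda,\alpha,\delta) = \frac{1}{\delta}\,\E[\ln(1+\sinr)]$, and the prefactor $1/\delta$ in \eqref{eq:333} is justified at the outset.

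Next, I would convert the expectation to an integral of the CCDF via the standard identity for nonnegative random variables,
\begin{equation*}
\E[\ln(1+\sinr)] = \int_{t>0} \PP[\ln(1+\sinr)>t]\,\dd t = \int_{t>0} \PP[\sinr > e^t-1]\,\dd t.
\end{equation*}
The inner probability is exactly the coverage probability at threshold $T=e^t-1$ under random frequency reuse, so the first step is to plug $T = e^t-1$ into a conditional (not yet integrated over $r$) version of Theorem~\ref{thm:freq_outage}, and then insert the $1/\delta$ factor.

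Then I would carry out the conditional coverage computation exactly as in the proofs of Theorems~\ref{thm:main1} and~\ref{thm:freq_outage}. Conditioning on the distance $r$ from the tagged user to its serving BS -- which, as emphasized in the proof of Theorem~\ref{thm:freq_outage}, is the \emph{closest} BS from the full process $\Phi$ and therefore still has PDF $f_r(r)=2\pi\lambda r e^{-\pi\lambda r^2}$ with the full intensity $\lambda$ -- and using $h\sim\exp(\mu)$ on the desired link gives
\begin{equation*}
\PP[\sinr>e^t-1\mid r] = e^{-\sigma^2 \mu r^\alpha(e^t-1)}\,\mathcal{L}_{I_r}\!\left(\mu r^\alpha(e^t-1)\right).
\end{equation*}
Because each interfering BS independently picks one of $\delta$ bands, the co-channel interferers form an independently thinned PPP of intensity $\lambda/\delta$ on the region $\{\|x\|>r\}$. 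Applying the PGFL with this reduced density, and re-using the $v^{-\alpha}\to y$ change of variables from \eqref{eq:laplace}, reproduces the Laplace transform stated after \eqref{eq:333} with the characteristic $1/\delta$ appearing inside the exponent.

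Assembling the pieces -- swapping the order of the $r$ and $t$ integrals when convenient -- yields the claimed double-integral form. The only subtlety, and the step I would double-check most carefully, is the asymmetric role of $\lambda$ in the two exponentials: the outer density $\lambda$ (in $e^{-\pi\lambda r^2}$ and in $f_r$) comes from nearest-neighbour attachment to the \emph{entire} deployment, whereas the inner density $\lambda/\delta$ governs the thinned interferer process. This distinction is the one novel ingredient beyond Theorem~\ref{thm:rate}; the rest of the argument is a direct transcription of that proof with the thinned intensity substituted in the PGFL step.
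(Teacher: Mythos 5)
Your proposal is correct and follows exactly the route the paper itself indicates (the paper omits the details, stating only that the proof combines the $1/\delta$ bandwidth prefactor with the arguments of Theorems \ref{thm:freq_outage} and \ref{thm:rate}): nearest-BS conditioning with the full intensity $\lambda$, the CCDF identity for $\E[\ln(1+\sinr)]$, and the PGFL applied to the thinned interferer process of intensity $\lambda/\delta$. Your explicit flagging of the asymmetric roles of $\lambda$ (serving-link distance) and $\lambda/\delta$ (co-channel interferers) is precisely the one substantive point the omitted proof relies on, so the proposal is complete.
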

\begin{proof}
The average rate of a typical mobile user is $\frac{1}{\delta}\mathbb{E}[\ln(1+\sinr)]$, and the proof proceeds in a similar manner to Theorem \ref{thm:freq_outage} and Theorem \ref{thm:rate} and so is omitted.
\end{proof}
From the above Theorem, the average rate \emph{without noise} is given by
\[
\tau(\lambda,\alpha,\delta) =  \int_{t>0} \frac 1   {\delta+(e^t-1)^{2/\alpha} \int_{(e^t-1)^{-2/\alpha}}^\infty \frac{1}{1+x^{\alpha/2}} \dd x} \dd t,
\]
and is obviously maximized for $\delta=1$. This can also be seen visually in Fig. \ref{fig:rate_freq}, which shows the average rate as a function of $\delta$ for two different path loss exponents. The next corollary generalizes this observation to the case of non-zero noise.
\begin{corollary}
The average rate of a typical mobile user $ \tau(\lambda,\alpha,\delta)$ is maximized for $\delta=1$.
\end{corollary}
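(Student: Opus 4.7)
The plan is to show the stronger statement that $\tau(\lambda,\alpha,\delta)$ is monotonically nonincreasing in $\delta$ over $\delta \geq 1$, from which $\delta=1$ as the maximizer follows immediately (whether $\delta$ is restricted to integers or allowed to vary continuously). The strategy is to gather all $\delta$-dependence of the integrand into a single explicit place via a change of variables, and then read off the monotonicity by inspection.

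Starting from the theorem just above, I would first combine the two Gaussian-in-$r$ exponential factors, writing
\[
\tau(\lambda,\alpha,\delta) = \frac{2\pi\lambda}{\delta} \int_{0}^{\infty}\!\!\int_{0}^{\infty} r\, \exp\!\left(-\pi\lambda r^{2}\,\frac{\delta + I(t)}{\delta}\right) \exp\!\left(-\sigma^{2}\mu r^{\alpha}(e^{t}-1)\right)\dd t\, \dd r,
\]
where $I(t) = (e^{t}-1)^{2/\alpha}\int_{(e^{t}-1)^{-2/\alpha}}^{\infty}(1+g^{\alpha/2})^{-1}\dd g \geq 0$ is independent of $\delta$. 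The key step is then the substitution $u = r^{2}/\delta$, so that $r\,\dd r = (\delta/2)\dd u$ and $r^{\alpha} = \delta^{\alpha/2} u^{\alpha/2}$. The factor $1/\delta$ out front cancels with the Jacobian, and the ratio $(\delta + I(t))/\delta$ in the first exponent is cleared, yielding
\[
\tau(\lambda,\alpha,\delta) = \pi\lambda \int_{0}^{\infty}\!\!\int_{0}^{\infty} \exp\!\left(-\pi\lambda u(\delta + I(t))\right) \exp\!\left(-\sigma^{2}\mu\, \delta^{\alpha/2} u^{\alpha/2}(e^{t}-1)\right)\dd t\, \dd u.
\]
Now the $\delta$-dependence appears only inside the two exponentials, and in both of them only additively/multiplicatively through $\delta$ or $\delta^{\alpha/2}$ with a nonnegative coefficient.

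Third, for any fixed $u,t > 0$ I observe that $\delta \mapsto \pi\lambda u(\delta + I(t))$ is strictly increasing and $\delta \mapsto \sigma^{2}\mu\, \delta^{\alpha/2} u^{\alpha/2}(e^{t}-1)$ is nondecreasing (strictly increasing whenever $\sigma^{2}>0$ and $t>0$), so both integrand factors are nonincreasing in $\delta$. Hence the integrand is pointwise nonincreasing in $\delta$, and pointwise monotonicity lifts to the double integral (dominated convergence is not even needed, as all terms are nonnegative and comparison is pointwise). It follows that $\tau(\lambda,\alpha,\delta) \leq \tau(\lambda,\alpha,1)$ for every $\delta \geq 1$, with strict inequality whenever $\delta > 1$ and the integrand is nontrivial, establishing the corollary.

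The only real step that requires thought is the substitution $u = r^{2}/\delta$, because it is what exposes the monotonicity; without it the $\delta$-dependence appears both in the prefactor $1/\delta$ and inside the Laplace exponent in a way that superficially looks like a trade-off. Once the substitution is made, everything reduces to the elementary fact that a sum of nonincreasing functions of $\delta$ in a negative exponent yields a nonincreasing function, and no delicate analysis (nor the $\alpha>2$ hypothesis beyond ensuring $\alpha/2>0$) is needed.
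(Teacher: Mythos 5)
Your proof is correct and is essentially identical to the paper's: the paper's one-line argument is precisely the substitution $r^2 \to \delta y$ in \eqref{eq:333}, after which the integrand is observed to be pointwise decreasing in $\delta$. You have simply written out the details (Jacobian cancellation, the two exponents becoming $-\pi\lambda u(\delta+I(t))$ and $-\sigma^2\mu\,\delta^{\alpha/2}u^{\alpha/2}(e^t-1)$) that the paper leaves implicit.
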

\begin{proof}
Using the substitution $r^2 \rightarrow \delta y$ in \eqref{eq:333}, we observe that the integrand decreases with $\delta$, hence verifying the claim.
\end{proof}
As in \eqref{eq:rate_alpha4}, the average rate simplifies further for the case of $\alpha=4$ and no noise and
\[\tau(\lambda,4,\delta)= \int_{t>0}\frac 1  {\delta+  \sqrt{e^t-1}  \left(\pi/2-\arctan(1/\sqrt{e^{t}-1})\right)} \dd t.\]
By using numerical integration, $\tau(\lambda,4,1)\approx 1.49$ nats/sec/Hz, $\tau(\lambda,4,2)\approx 1.1$ {nats/sec/Hz}, and $\tau(\lambda,4,3)\approx 0.87 $ nats/sec/Hz.

\section{Conclusions}

This paper has presented a new framework for downlink cellular network analysis.  It is significantly more tractable than the traditional grid-based models, and appears to track (and lower bound) a real deployment about as accurately as the traditional grid model (which upper bounds).  A final verdict on its accuracy will require extensive comparison with further real base station deployments. In view of current trends whereby base stations are deployed somewhat opportunistically with ever-increasing density, and having variable cell radii, the proposed model may actually become increasingly accurate as well as much more tractable.

Given the number of problems of contemporary interest that require modeling neighboring base stations, the possibilities for future work using this model are extensive. An extension to the uplink would be desirable. Further extensions to this approach could include random spatial placements of base stations that model repulsion, or heterogeneous networks that have both macro and micro/pico/femto cells with differing transmit powers and coverage areas. It would also be of interest to explore how various multiple antenna techniques, opportunistic scheduling, and base station cooperation affect coverage and rate.

\section{Acknowledgements}

The authors appreciate feedback from A. Lozano, N. Jindal, and J. Foschini.  In particular they suggested the consideration of lognormal shadowing, which led to the general fading results.  The detailed feedback of the reviewers was also helpful.

\bibliographystyle{IEEEtran}
\bibliography{Andrews}

\appendices
\section{Proof of Theorem \ref{thm:main2}}
\label{app:proof1}
The proof tracks the proof of Theorem 1 up until step (a) of \eqref{eq:laplace}.  Then,
\begin{align}
{\cal L}_{I_r}(s) &= \E_{\Phi,\{g_i\}}\left[\prod_{i\in \Phi\setminus \{b_o\}} \E_{g_i} [\exp(-s g_i R_i^{-\alpha})] \right] \nonumber\\
&= \E_{\Phi}\left[\prod_{i\in \Phi\setminus\{ b_o\}} \frac{\mu}{\mu+s R_i^{-\alpha}}\right] \nonumber\\
&=  \exp\left(-2\pi\lambda \int_r^{\infty} \left(1 - \frac{\mu}{\mu+s v^{-\alpha}}\right) v  \dd v \right),
\end{align}
which admits a much simpler form than \eqref{eq:laplace} due to the new assumption that $g_i \sim \exp(\mu)$.  The integration limits are still from $r$ to $\infty$ and plugging in $s= \mu Tr^{\alpha}$ now gives
\begin{equation*}
{\cal L}_{I_r}(\mu Tr^{\alpha}) = \exp\left(-2\pi \lambda \int_{r}^\infty \frac{T}{T+(v/r)^\alpha} v \dd v\right).
\end{equation*}
Employing a change of variables $u = \left(\frac{v}{rT^{\frac{1}{\alpha}}} \right)^2$ results in
\begin{equation}
{\cal L}_{I_r}(\mu Tr^{\alpha}) = \exp\left(-\pi r^2 \lambda \rho(T,\alpha) \right),
\label{eq:Ir}
\end{equation}
where
\begin{equation*}
\rho(T,\alpha)= T^{2/\alpha}\int_{T^{-2/\alpha}}^\infty \frac{1}{1+u^{\alpha/2}} \dd u.
\end{equation*}
Plugging \eqref{eq:Ir} into \eqref{eq:pc1} with $v \to r^2$ gives the desired result.

\section{Proof of Theorem \ref{thm:rate}}
\label{app:thm3}
The ergodic rate of the typical user is $\tau \triangleq \E[\ln(1+\sinr)]$ where the average is taken over both the spatial PPP and the fading distribution. Since for a  positive random variable $X$, $\E[X]= \int_{t>0} \PP(X>t)\dd t $, it follows similar to Theorems \ref{thm:main1} and \ref{thm:main2} that
\begin{align*}
\tau(\lambda,\alpha) \triangleq \E[\ln(1+\sinr)]
& = \int_{r>0} e^{-\pi \lambda r^2} \E\left(\ln\left(1+\frac{hr^{-\alpha}}{\sigma^2+I_r}\right)\right) 2 \pi \lambda r \dd r\\
& = \int_{r>0} e^{-\pi \lambda r^2} \int_{t>0} \PP\left[\ln\left(1+\frac{hr^{-\alpha}}{\sigma^2+I_r}\right)>t\right] \dd t  2 \pi \lambda r \dd r\\
& = \int_{r>0} e^{-\pi \lambda r^2} \int_{t>0} \PP\left[h>r^{\alpha}(\sigma^2+I_r)(e^t-1)\right] \dd t  2 \pi \lambda r \dd r\\
& = \int_{r>0} e^{-\pi \lambda r^2} \int_{t>0} \E\left( \exp\left(-\mu r^{\alpha}(\sigma^2+I_r)(e^t-1)\right)\right) \dd t  2 \pi \lambda r \dd r\\
& = \int_{r>0} e^{-\pi \lambda r^2} \int_{t>0} e^{-\sigma^2\mu r^{\alpha}(e^t-1)} {\cal L}_{I_r}(\mu r^{\alpha}(e^t-1)) \dd t  2 \pi \lambda r \dd r.
\end{align*}
From \eqref{eq:laplace} we obtain
\begin{align*}
{\cal L}_{I_r}(\mu r^{\alpha}(e^t-1))
& = \exp\left(-2\pi \lambda \int_{r}^\infty \left( 1-\frac{1}{1+(e^t-1)(r/v)^\alpha}\right) v \dd v\right)\\
& =\exp\left(-\pi \lambda r^2 \int_{1}^\infty \frac{e^{t}-1}{e^{t}-1+u^{\alpha/2}} \dd u\right)\\
& =\exp\left(-\pi \lambda r^2  (e^{t}-1)^{2/\alpha} \int_{(e^{t}-1)^{-2/\alpha}}^\infty \frac{1}{1+x^{\alpha/2}}  \dd x \right),
\end{align*}
and the proof is complete.
\newpage

\begin{figure}
\centering
\includegraphics[width=3.5in]{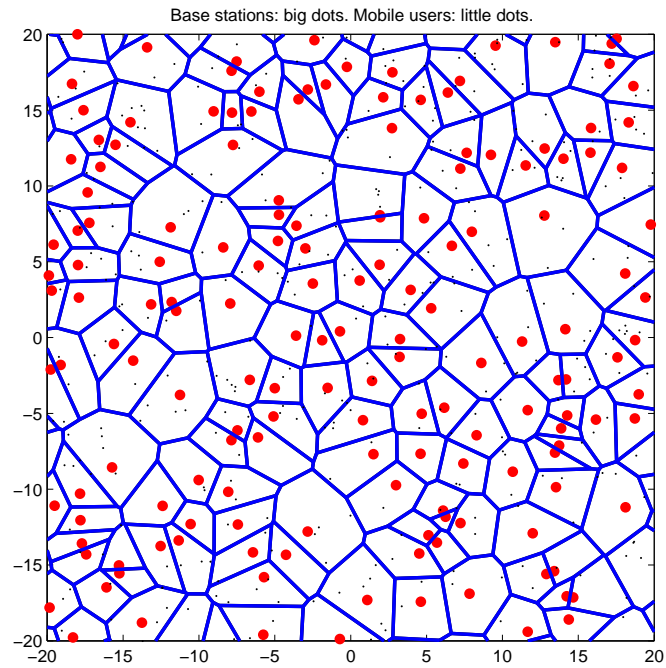}
\caption{Poisson distributed base stations and mobiles, with each mobile associated with the nearest BS. The cell boundaries are shown and form a Voronoi tessellation.}
\label{fig:VoronoiCells}
\end{figure}

\begin{figure}
\centering
\includegraphics[width=3.5in]{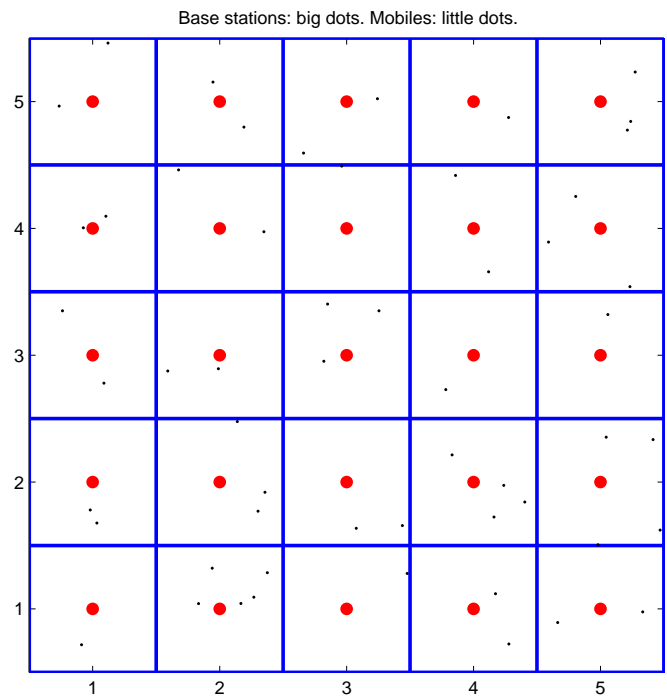}
\caption{A regular square lattice model for cellular base stations with one tier of eight interfering base stations.  The base stations are marked by circles and the active mobile user in the tagged cell by a cross.}
\label{fig:reg}
\end{figure}

\begin{figure}
\centering
\includegraphics[width=3.5in]{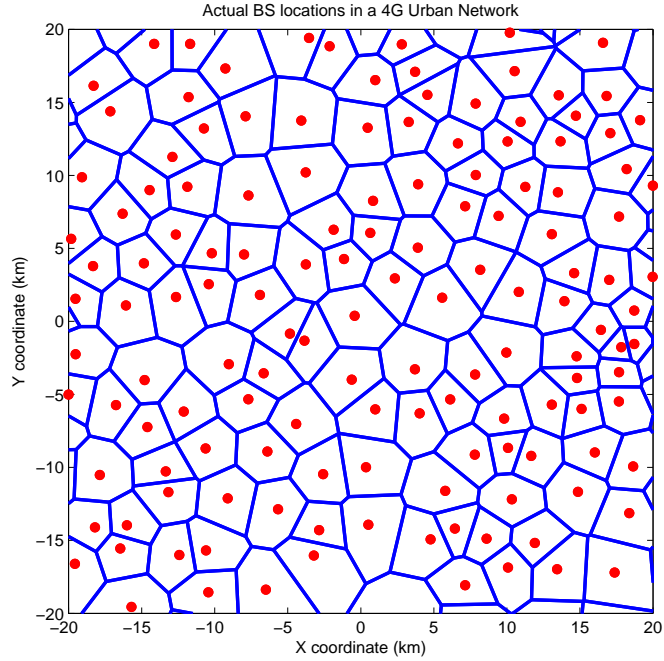}
\caption{A $40 \times 40$ km view of a current base station deployment by a major service provider in a relatively flat urban area, with cell boundaries corresponding to a Voronoi tessellation.}
\label{fig:RealBS}
\end{figure}

\begin{figure}
\centering
\includegraphics[width=4.4in]{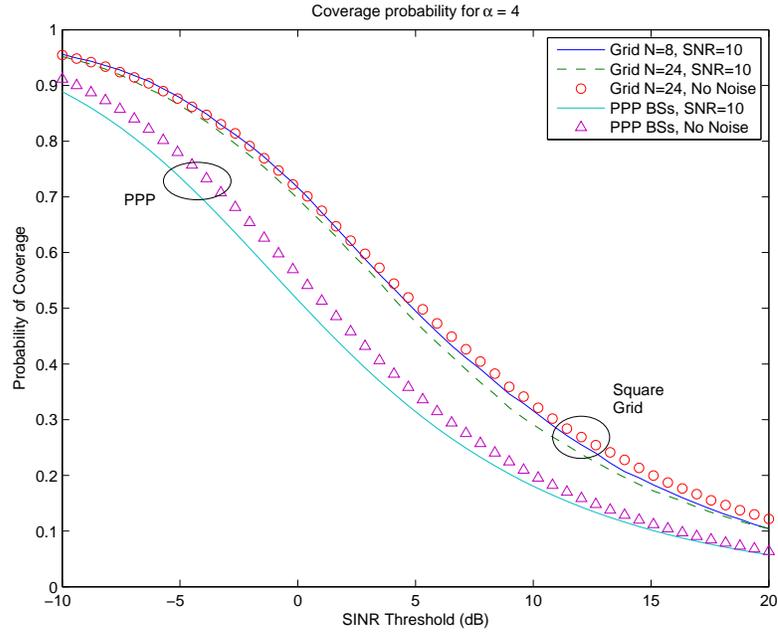}
\caption{Probability of coverage comparison between proposed PPP base station model and square grid model with $N=8,24$ and $\alpha =4$.  The no noise
approximation is quite accurate, and it can be seen there is only a slightly lower coverage area with 24 interfering base stations versus 8.}
\label{fig:cov1}
\end{figure}

\begin{figure}
\centering
\includegraphics[width=3.2in]{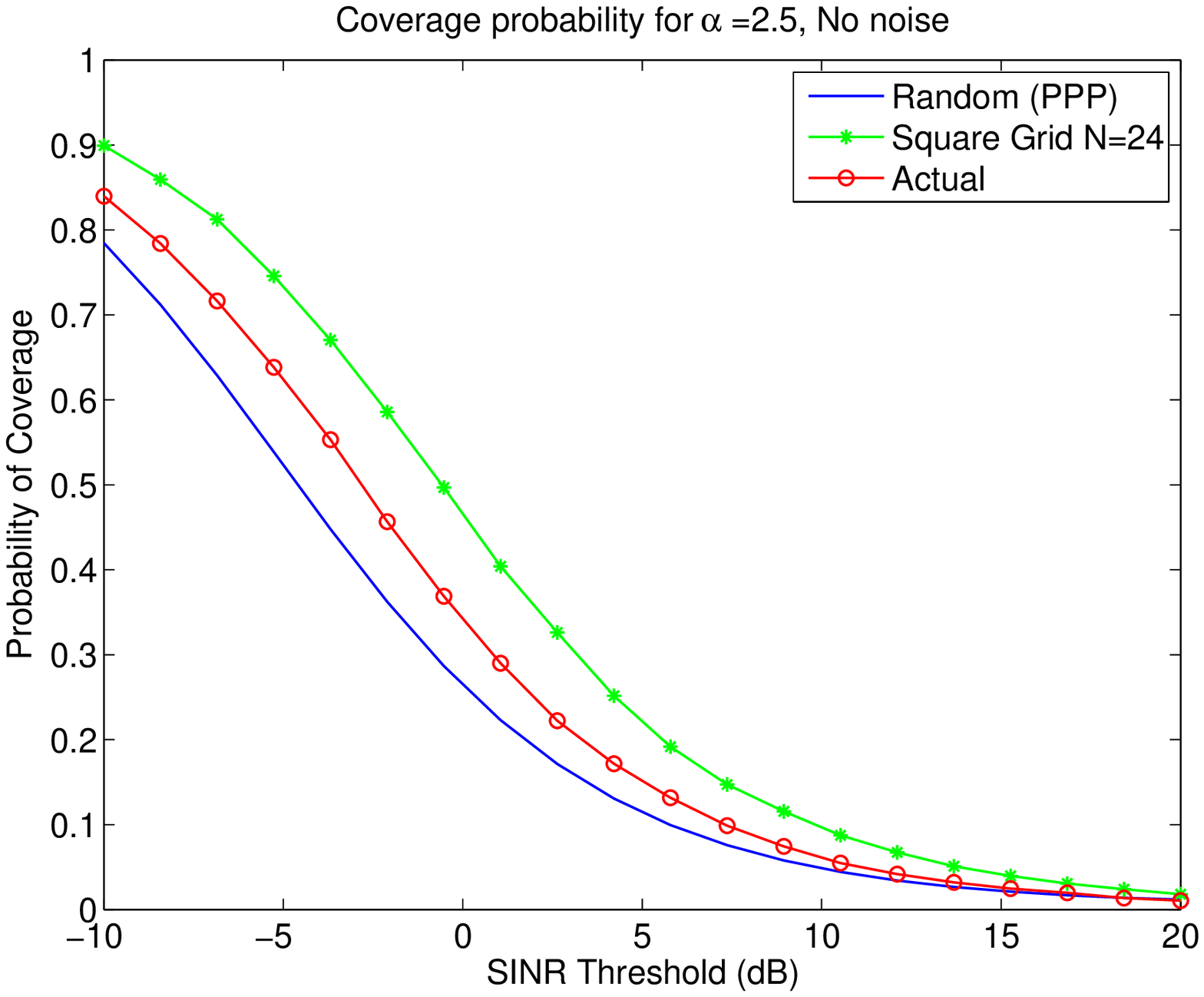}
\includegraphics[width=3.2in]{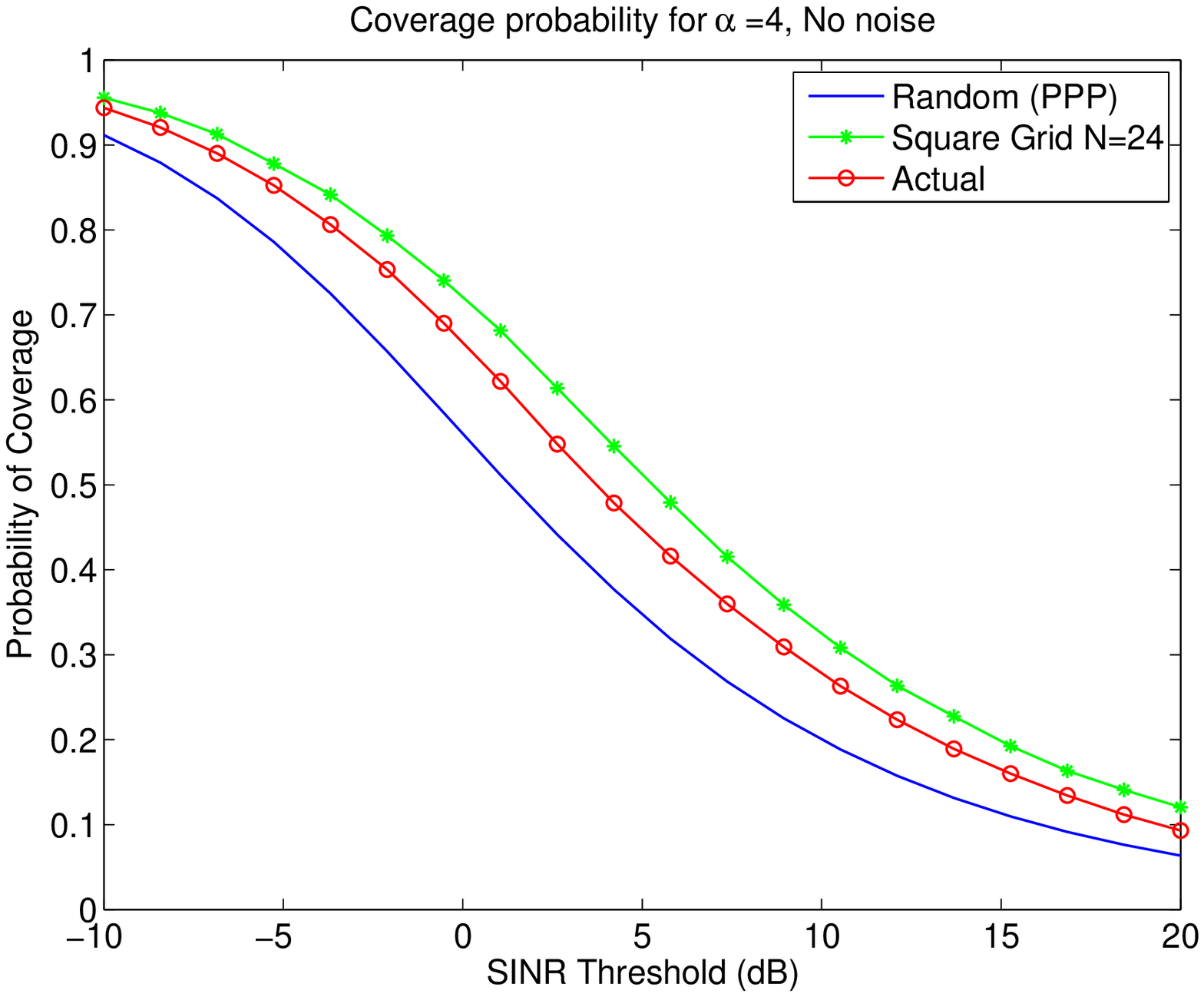}
\caption{Probability of coverage for $\alpha = 2.5$ (left) and $\alpha = 4$ (right), $\snr = 10$, exponential interference. The proposed model is a lower bound and more accurate at lower path loss exponents.}
\label{fig:cov2}
\end{figure}

\begin{figure}
\centering
\includegraphics[width=4.4in]{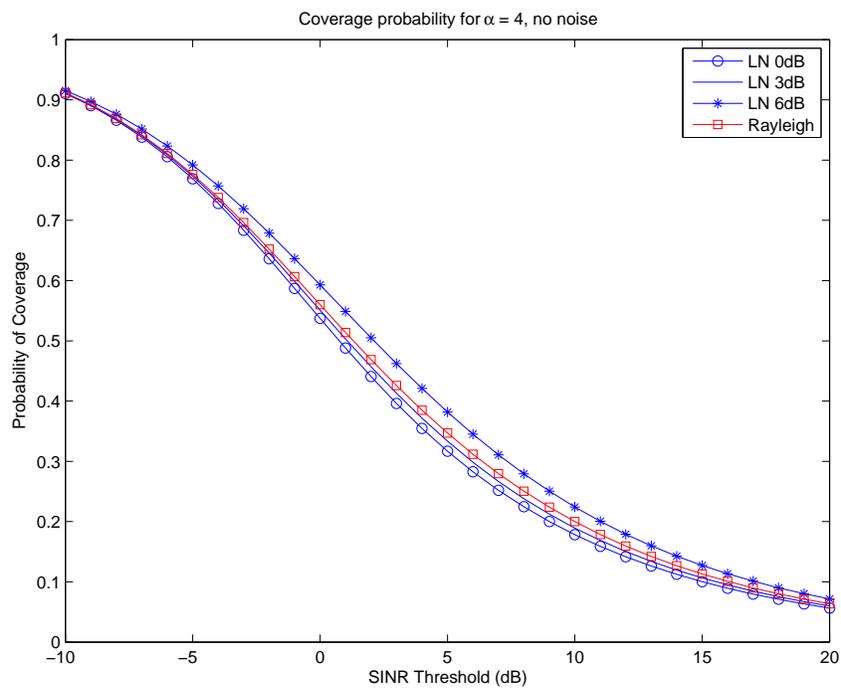}
\caption{Poisson distributed base stations, no noise, $\alpha = 4$ with 4 curves corresponding to lognormal shadowing standard deviations of 0, 3, and 6 dB and Rayleigh fading interference (without shadowing).}
\label{fig:LN1}
\end{figure}

\begin{figure}
\centering
\includegraphics[width=4.4in]{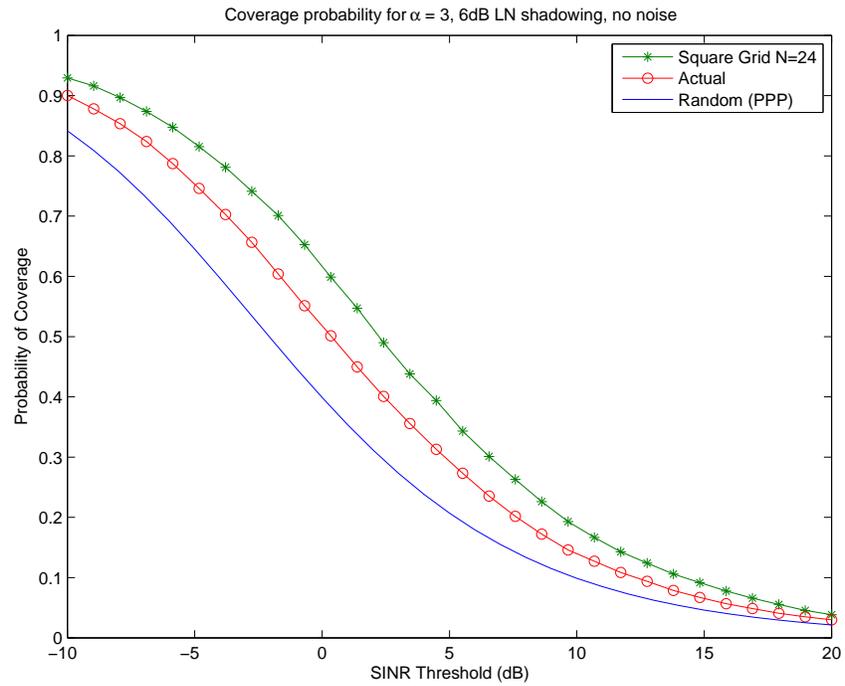}
\caption{Poisson vs. actual vs. grid base stations for $\alpha = 3$ with LN interference of 6 dB.}
\label{fig:LN2}
\end{figure}




\begin{figure}
\begin{center}
\includegraphics[width=3in]{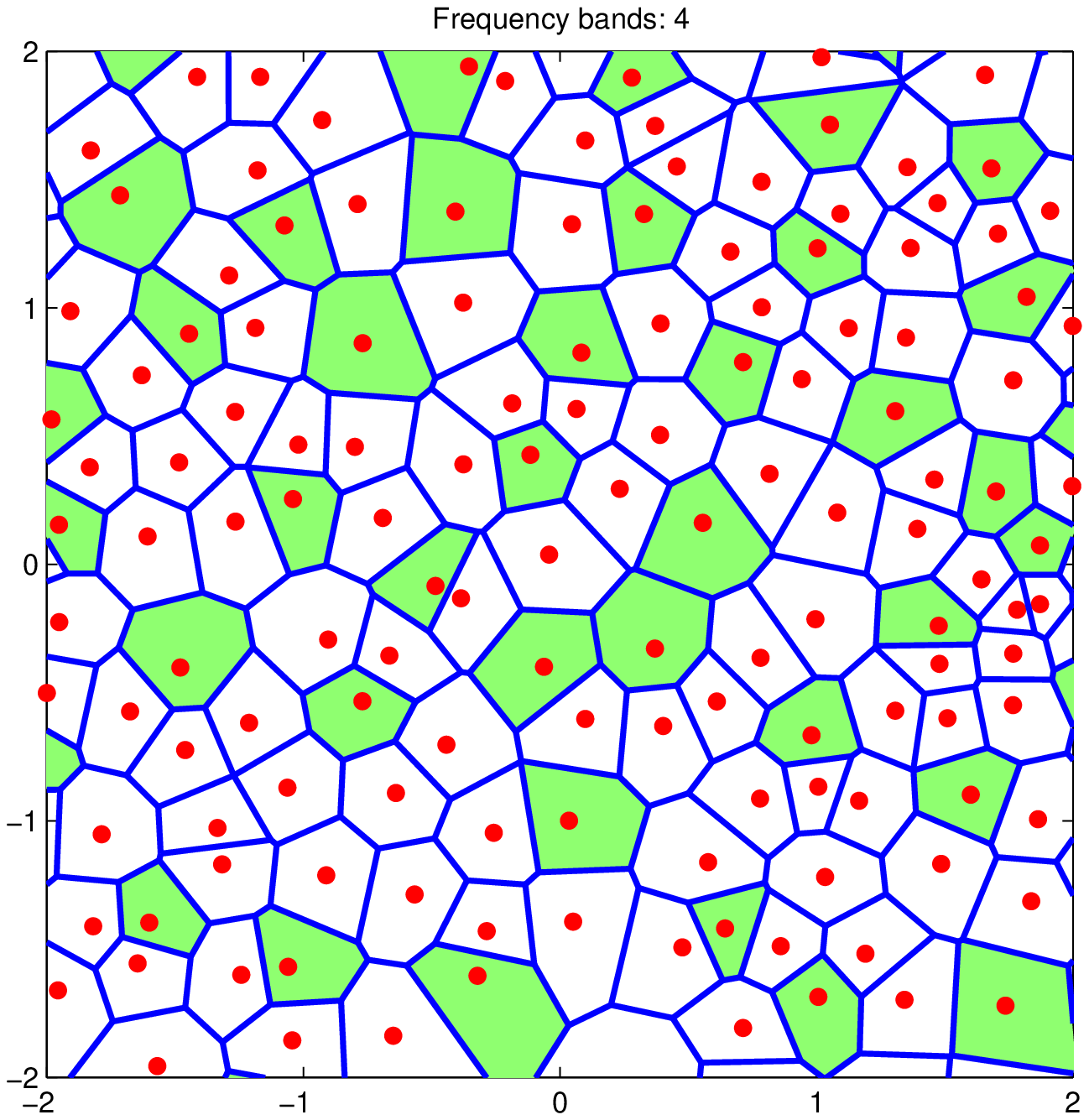}\includegraphics[width=3in]{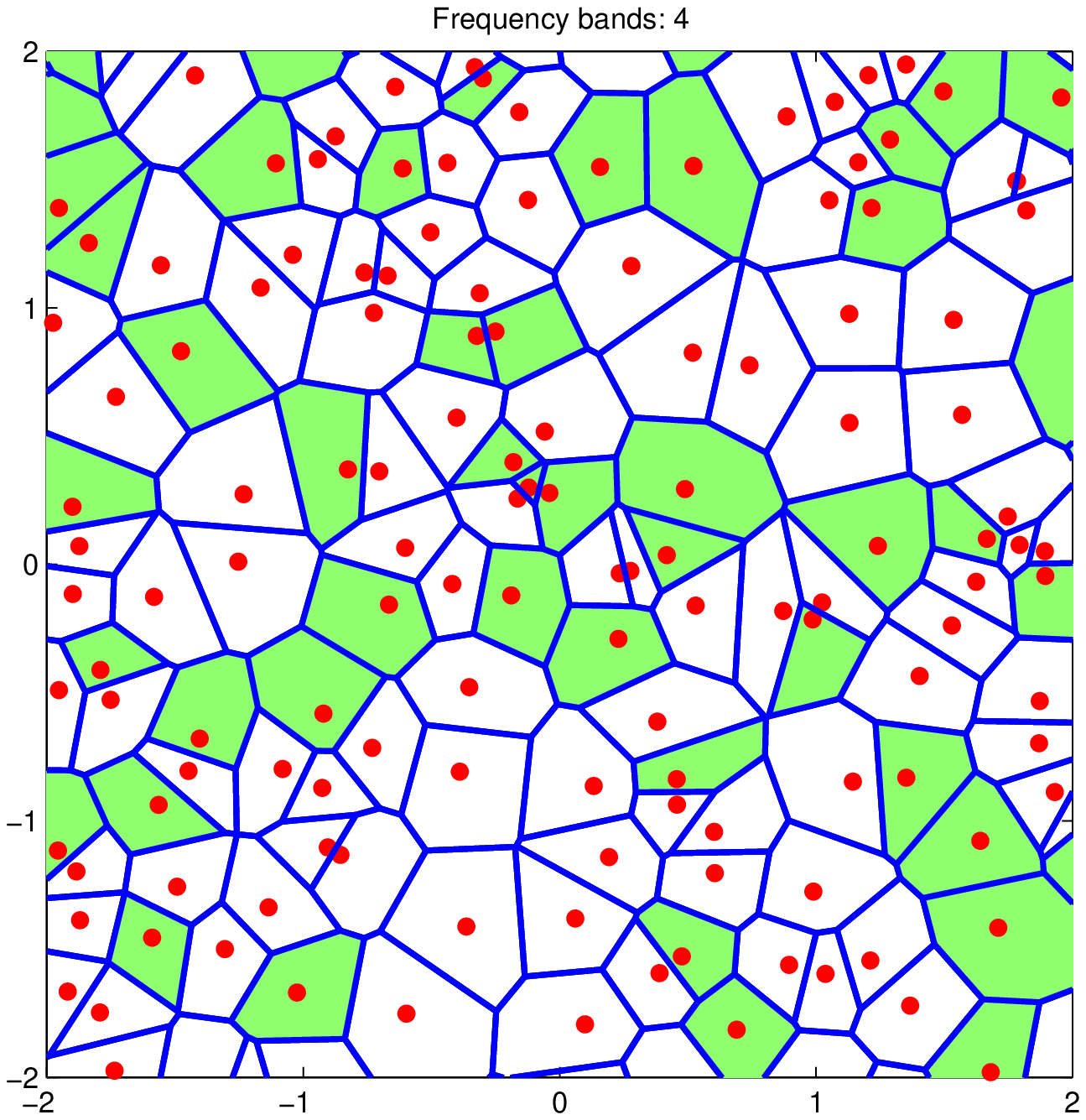}
\end{center}
\caption{ Left: A spatial reuse of $\delta=4$ is shown for an actual BS development  using a  greedy frequency  allocation. Right: Corresponds to a $\delta=4$ reuse for a  Poisson base station network with  random frequency allocation. The shaded cells use the same  frequency.}
\label{fig:freqReuse}
  \end{figure}

\begin{figure}
\centering
\includegraphics[width=3.2in]{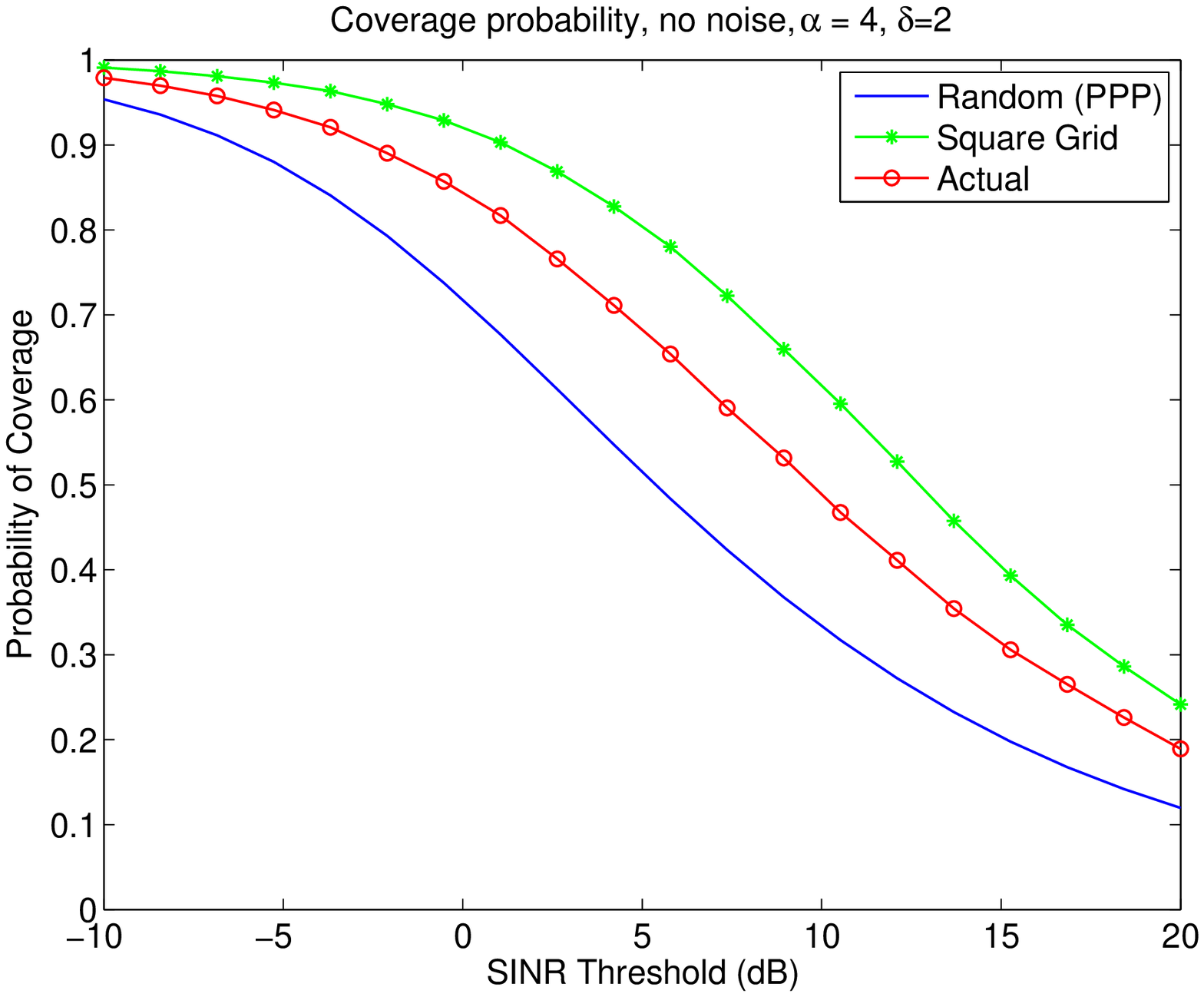}
\includegraphics[width=3.2in]{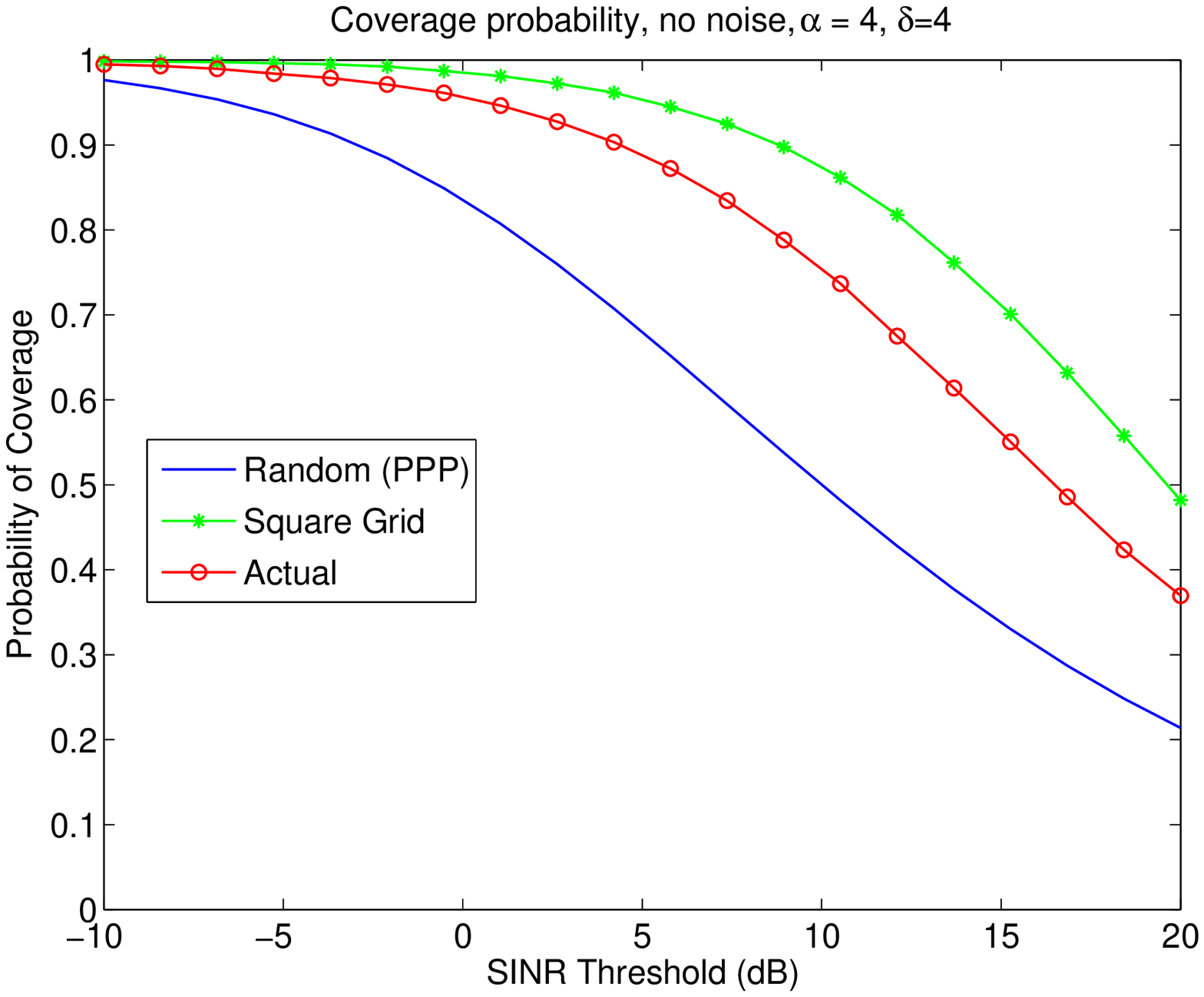}
\caption{Probability of coverage for frequency reuse factors $\delta =2$ (left) and $\delta =4$ (right). Lower spatial reuse (higher $\delta$) leads to better outage performance, and we observe that all 3 curves exhibit similar behavior.}
\label{fig:coverage_freq}
\end{figure}

\begin{figure}
\centering
\includegraphics[width=4.4in]{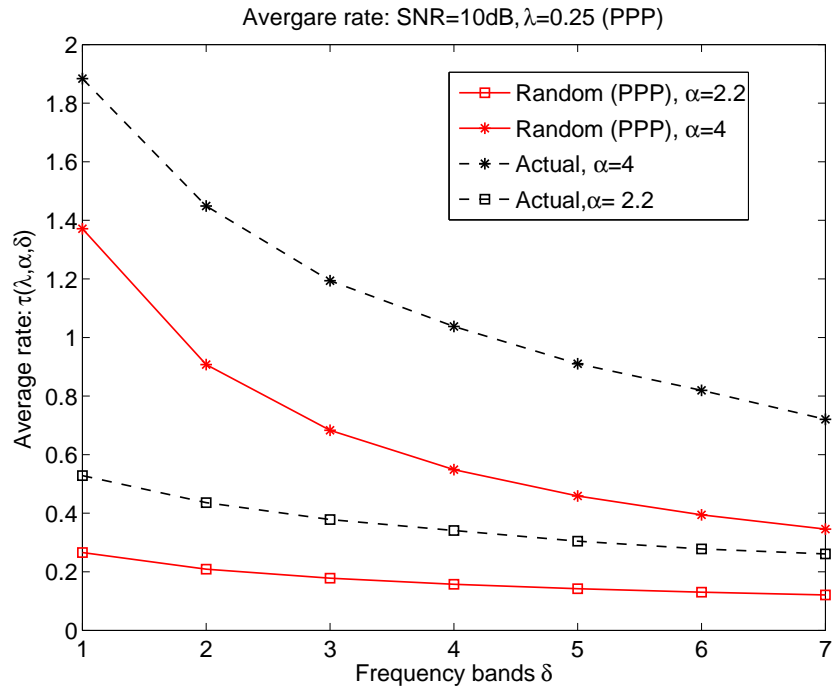}
\caption{Average rate of a typical user with $\snr=10$dB for both Poisson-distributed and actual base station locations.  The average rate is maximized when all the cells use the same frequency and hence the complete bandwidth.}
\label{fig:rate_freq}
\end{figure}


\end{document}